\documentclass[a4paper,preprint, pre]{revtex4}

\usepackage{amsmath,amssymb}
\usepackage{makeidx}
\usepackage{amsfonts}
\usepackage[ansinew]{inputenc}
\usepackage[usenames,dvipsnames]{pstricks}
\usepackage{epsfig,graphicx}
\usepackage{color}

\DeclareMathOperator*{\sgn}{sgn}
\newcommand{\vecSpin}{\mbox{\boldmath$S$}}
\newcommand{\vecSpinhat}{\hat\vecSpin}
\newcommand{\vecPsi}{\mbox{\boldmath$\psi$}}
\newcommand{\vecPsihat}{\hat\vecPsi}

\newcommand{\Spin}{\mbox{$S$}}

\newcommand{\Prob}{\mbox{$P$}}

\newcommand{\Psihat}{\hat\psi}
\newcommand{\nullvec}{\mbox{\boldmath$0$}}
\newcommand{\rme}{\mathrm{e}}
\newcommand{\rmi}{\mathrm{i}}
\newcommand{\rmd}{\mathrm{d}}

\newcommand{\setI}{\textbf{I}}

\newcommand{\setN}{[N]}

\newcommand{\1}{\mathbf{1}}

\newtheorem{theorem}{Theorem}[section]

\newtheorem{proposition}[theorem]{Proposition}

\newenvironment{proof}[1][Proof]{\begin{trivlist}
\item[\hskip \labelsep {\bfseries #1}]}{\end{trivlist}}

\newcommand{\qed}{\nobreak \ifvmode \relax \else
      \ifdim\lastskip<1.5em \hskip-\lastskip
      \hskip1.5em plus0em minus0.5em \fi \nobreak
      \vrule height0.75em width0.5em depth0.25em\fi}

\makeindex

\begin{document}

\title{Phase transitions and memory effects in the dynamics of Boolean networks}

\author{Alexander Mozeika}\email{a.s.mozeika@aston.ac.uk}
\author{David Saad}

\affiliation{The Non-linearity and Complexity  Research Group, Aston University, Birmingham B4 7ET, UK.}

\date{\today}

\begin{abstract}

The generating functional method is employed to investigate the synchronous dynamics of Boolean networks, providing an exact result for the system dynamics via a set of macroscopic order parameters. The topology of the networks studied and its constituent Boolean functions represent the system's quenched disorder and are sampled from a given distribution. The framework accommodates a variety of topologies and Boolean function distributions and can be used to study both the noisy and noiseless regimes; it enables one to calculate correlation functions at different times that are inaccessible via commonly used approximations. It is also used to determine conditions for the annealed approximation to be valid, explore phases of the system under different levels of noise and obtain results for models with strong memory effects, where existing approximations break down. Links between BN and general Boolean formulas are identified and common results to both system types are highlighted.

\end{abstract}

\maketitle

\section{Introduction}

Boolean networks (BN) have been suggested as simplified models of various biological systems, in particular for modeling gene-regulatory networks~\cite{Kauffman}. Simplifying the state of genes by adopting a two state variable representation, {\small{\sf ON/OFF}}, enables one to model the complex interactions between genes as arbitrary Boolean functions of randomly selected variables. This model gives rise to a rich and complex behavior that has been successfully employed to gain insight into the dynamics and steady states of various gene-regulatory systems~\cite{RBNbook}.

BN models comprise $N$ sites (genes), each of which is represented by a binary variable. The state of each variable is determined by the states of $k$ randomly selected sites via a $k$-input Boolean function. The specific $k$ input variables selected for each site constitutes the network topology and the selected Boolean functions determine the corresponding interaction leading to the variables' state. In the original formulation~\cite{Kauffman}, both topology and Boolean functions were selected uniformly from the ensemble of networks with in-degree $k$ per variable and $2^{2^k}$ possible Boolean functions, respectively. Both are considered fixed (\emph{quenched} variables). Moreover, the original model was deterministic and the analysis therefore focused on its periodic-orbit attractors, steady-state and their basins of attraction.

While this family of networks was originally introduced to model the gene-regulatory network~\cite{RBNbook}, and is commonly known as Random Boolean Networks (RBN) or Kauffman nets, similar topologies have been employed to study network properties in other application domains, ranging from social~\cite{Moreira} to genetic~\cite{Nature} and neural~\cite{DerridaANN} networks. Although the topology used is common to all these models, based on a discrete state-space and random $k$-variable Boolean interactions, the nature of the interaction may be different for each of the models. We will refer to the general class of $N$-variable binary system with connectivity degree $k$ as the N-k model~\cite{Aldana,Drossel}.

This abstraction of complex gene-regulatory system lends itself to analysis in terms of both their dynamics and equilibrium properties~\cite{Correale:RBN,Leone:RBN}. Equilibrium analysis relies mainly on the cavity method, while the dynamics has been mostly investigated using the \emph{annealed approximation}~\cite{Derrida} due to its simplicity and success in providing accurate results in many of the models studied, especially for very large systems. The underlying approximation in this approach is that both thermal and quenched variables (primarily the network topology) are considered to be on equal footing and are sampled at each time step. This helps suppress emerging correlations of specific sites at different times, simplifies the analysis and gives rise to an effective methodology which works in most cases. The annealed approximation was particularly successful in large-scale systems ($N\rightarrow\infty$); it allows one to predict the evolution of network activity  and Hamming distance order parameters. The former refers to the magnetization or the proportion of {\small{\sf ON/OFF}} states and the latter to the difference between the states of the network starting from different initial conditions. It was shown~\cite{DerridaAndW,Hilhorst} that the annealed approximation provides accurate magnetization and Hamming distance order parameter predictions for RBN (i.e., with uniformly sampled Boolean functions); however, the conditions for its applicability and validity for general N-k systems has remained unclear~\cite{Kesseli}. Moreover, in some cases, especially in systems with memory, discrepancies have been found between results obtained via the annealed approximation and simulation results~\cite{RTN}, casting doubts on the validity of the approach for such models.

The aim of the current paper is to develop further a framework for exact analysis of N-k models based on the generating functional analysis (GFA) framework~\cite{dD,Mozeika:BN}, which has been employed successfully in the study of various Ising spin models~\cite{MimuraAndCoolen}. The newly developed framework is then employed to determine the conditions under which the annealed approximation is valid, to investigate the possible phases of BN depending on the noise level and to demonstrate its efficacy for analyzing systems with memory, where the annealed approximation is known to break down~\cite{RTN}. We note that an alternative to the GFA method (called the \emph{dynamical cavity} method) was recently introduced~\cite{NeriAndBolle}, which we believe can be also used for the range of models studied here.

Section~\ref{section:model} introduces the BN model while Section~\ref{section:GFA} describes the methodology used and its application to the current model. In section~\ref{section:results} we employ the dynamical equations obtained for the order parameters to investigate the conditions under which the annealed approximation provides exact results; a similar set of equations is then used to identify possible phases of the system and to analyze the dynamics of system with memory. Finally, in section~\ref{section:summary}, we summarize the results obtained and point to future research directions. Some of the detailed derivations appear in dedicated appendices.

\section{Model\label{section:model}}
The model we consider here is a recurrent Boolean network which consists of $N$ binary variables $S_{i}(t)\in\{\!-\!1,1\}$ interacting via Boolean functions $\alpha_i:\{-1,1\}^k\rightarrow\{-1,1\}$ of exactly $k$ inputs. Because of thermal noise, which can flip the output of a function with probability $p$~\cite{Noise}, a site $S_{i}(t)$ in the network is operating according to the stochastic rule
\begin{eqnarray}
S_{i}(t\!+\!1)=\eta_{i}(t)\;\alpha_i
(S_{i_1}(t),\ldots,S_{i_k}(t)),\label{def:algorithm}
\end{eqnarray}
where $\eta_{i}(t)$ is an independent random variable from the distribution  $\Prob(\eta)=p\delta_{\eta;-1}\!+\!(1\!-\!p)\delta_{\eta;1}$. The function-output $S_i(t\!+\!1)$ is completely random when $p=1/2$ and completely deterministic when $p=0$. Averaging out the thermal noise $\{\eta_{i}(t)\}$ in the system governed by (\ref{def:algorithm}) gives rise to the microscopic law
\begin{eqnarray}
\Prob_{\alpha_i}(S_{i}(t\!+\!1)\vert S_{i_1}(t),..,S_{i_k}(t))=\frac{\rme^{\beta S_{i}(t\!+\!1)\alpha_i(S_{i_1}(t),..,S_{i_k}(t))}}{2\cosh\beta\alpha_i(S_{i_1}(t),..,S_{i_k}(t))},\label{eq:micro}
\end{eqnarray}
where the inverse temperature $\beta\!=\!1/T$ relates to the noise parameter $p$ via $\tanh\beta\!=\!1\!-\!2p$.

All sites in the network are updated in parallel and given the state of the network $\vecSpin(t)\in\{-1,1\}^N$  at time $t$,  the function-outputs $\vecSpin(t\!+\!1)$  at time $t\!+\!1$ for the different sites are independent of each other. This Markovian property allows us to write the probability of the microscopic path $\vecSpin(0)\rightarrow\!\cdots\!\rightarrow\vecSpin(t_{max})$ as a product of (\ref{eq:micro}) over all sites and time steps. Furthermore, we consider two copies of the same topology but with different initial conditions, shown in Figure \ref{fig:1}, comparing the two will enable us to study the effects of initial-state perturbations. Following similar arguments to those of the single network case, the joint probability of microscopic states in the two systems are given by
\begin{eqnarray}
\Prob[\{\vecSpin(t)\};\{\vecSpinhat(t)\}]\!=\!\Prob(\vecSpin(0),\vecSpinhat(0)) \prod_{t=0}^{t_{max}-1} \Prob(\vecSpin(t\!+\!1)\vert\vecSpin(t))P(\vecSpinhat(t\!+\!1)\vert\vecSpinhat(t)),\label{eq:PathProb}
\end{eqnarray}
where $\Prob(\vecSpin(t\!+\!1)\vert\vecSpin(t))\!=\!\prod_{i\!=\!1}^N\Prob_{\alpha_i}(\Spin_{i}(t\!+\!1)\vert \Spin_{i_1}(t),..,\Spin_{i_k}(t))$.

The sources of quenched disorder in our model are random Boolean functions and random connections. Boolean functions $\{\alpha_i\}$ are sampled randomly and independently from the distribution
\begin{eqnarray}
\Prob(\alpha)=\sum_{\gamma\in B}p_{\gamma}\;\delta_{\gamma;\alpha},\label{def:gate-disorder}
\end{eqnarray}
where $\sum_{\gamma\in B}p_{\gamma}=1$, $p_{\gamma}\geq0$ and $B$ is the set of all $k$-ary Boolean functions. The connectivity disorder arises from the random sampling of connections generated by selecting the $i$-th function and sampling exactly $k$ indices, $\setI\equiv\{i_1,..,i_k\}$, uniformly from the set of all possible indices $\setN=\{1,\ldots,N\}$. This gives rise to the probabilities
	\begin{eqnarray}
\Prob(\{A_{\setI}^{i}\})\!=\!\frac{1}{Z_A}\prod_{i=1}^{N}\left\{\delta\left[1;\!\!\sum_{\setI^\prime\subseteq\setN} A_{\setI^\prime}^{i}\right]\!\!\prod_{\setI\subseteq\setN}\left\{\frac{1}{N^k}\delta_{A_{\setI}^{i};1}+(1\!-\!\frac{1}{N^k})\delta_{A_{\setI}^{i};0}\right\}\right\},\label{def:connect-disorder}
\end{eqnarray}
where $Z_A$ is a normalization constant. The connectivity tensors $\{A_{i_1,\ldots,i_k}^{i}\}$ define the random topology via entering into the definition of probability~(\ref{eq:micro}) with $\alpha_i(S_{i_1}(t),..,S_{i_k}(t))$ being replaced by $\sum_{i_1,\ldots,i_k}^N A_{i_1,\ldots,i_k}^{i}\alpha_i(S_{i_1}(t),..,S_{i_k}(t))$.  Other connectivity and function profiles can be easily accommodated within our framework by incorporating additional constraints into the definitions (\ref{def:gate-disorder}) and (\ref{def:connect-disorder}) via the appropriate delta functions.
\begin{figure}
\begin{center}
\hspace*{-0mm} \setlength{\unitlength}{0.25mm}
\includegraphics[width=280\unitlength]{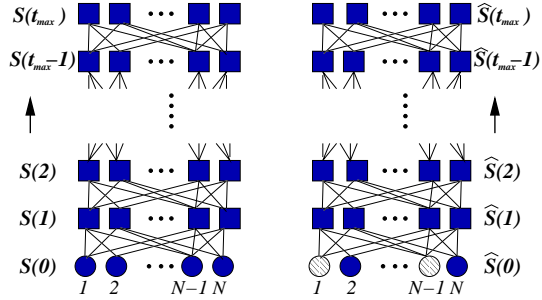}
\caption{The model of two systems with identical topology and different initial conditions. Functions are indicated by squares and input nodes by circles. White (striped) indicates flipped inputs. \label{fig:1}}
\end{center}
\end{figure}
\section{Generating functional analysis\label{section:GFA}}
To analyze the typical properties of the system governed by (\ref{def:algorithm}) we will use the generating functional method of De Dominicis~\cite{dD}. Following the prescription of~\cite{dD} we first define the generating function
\begin{eqnarray}
\Gamma[\vecPsi;\vecPsihat]&=&\left\langle\rme^{-\rmi\sum_{t,i}\{\psi_i(t) S_{i}(t)+\Psihat_i(t) \hat{S}_{i}(t)\}}\right\rangle~,\label{eq:GF}
\end{eqnarray}	
where $\langle\ldots\rangle$ denotes the average over all paths occurring in two systems governed by the joint probability (\ref{eq:PathProb}). The generating function (\ref{eq:GF}) allows us to compute moments of (\ref{eq:PathProb}) by taking partial derivatives with respect to the generating fields $\{\psi_i(t),\Psihat_j(s)\}$, e.g. $\langle S_i(t)\hat{S}_j(s)\rangle=-\lim_{\vecPsi,\vecPsihat\rightarrow\nullvec} \frac{\partial^2}{\partial_{\psi_i(t)}\partial_{\hat{\psi}_j(s)}}\Gamma[\vecPsi;\vecPsihat]$. Secondly, we assume that the system becomes self-averaging, i.e. all thermodynamic macroscopic properties are self-averaging, for $N\rightarrow\infty$~\cite{dD} and  compute $\overline{\Gamma[\vecPsi;\vecPsihat]}$, where $\overline{\cdots}$ is the disorder average; this gives rise to the macroscopic observables
\begin{eqnarray}
&&m(t)\!=\!\frac{1}{N}\sum_{i=1}^N\overline{\langle S_i(t)\rangle}
\hspace{10mm}=\lim_{\vecPsi,\vecPsihat\rightarrow\nullvec}\frac{\rmi}{N}\sum_{i=1}^N \frac{\partial\overline{\Gamma[\vecPsi;\vecPsihat]}}{\partial_{\psi_i(t)}}\label{def:observ}\\
&&C(t,s)\!=\!\frac{1}{N}\!\sum_{i=1}^N\overline{\langle S_i(t)S_i(s)\rangle}=\lim_{\vecPsi,\vecPsihat\rightarrow\nullvec}\!\frac{\!-\!1}{N}\!\sum_{i=1}^N\!\frac{\partial^2\overline{\Gamma[\vecPsi;\vecPsihat]}}{\partial_{\psi_i(t)}\partial_{\psi_i(s)}}\nonumber\\
&&C_{12}(t)\!=\!\frac{1}{N}\!\sum_{i=1}^N\overline{\langle S_i(t)\hat{S}_i(t)\rangle}=\lim_{\vecPsi,\vecPsihat\rightarrow\nullvec}\!\frac{\!-\!1}{N}\!\sum_{i=1}^N\frac{\partial^2\overline{\Gamma[\vecPsi;\vecPsihat]}}{\partial_{\psi_i(t)}\partial_{\hat{\psi}_i(t)}}\nonumber
\end{eqnarray}
where $m(t)$ is the network activity (or magnetization~\cite{MinRBN}), $C(t,\! s)$ is the correlation between two states of the same network  and $C_{12}(t)$  is the overlap between two copies of the same network which is related to the Hamming distance $d(t)$ via $d(t)=\frac{1}{2}(1\!-\!C_{12}(t))$.

Averaging the generating function (\ref{eq:GF}) over the disorder\footnote{Here, in order to present our derivations in a more compact form, we use the superscripts $1$ and $2$ to label two copies of the same system with different noise levels or initial conditions.}
 (see Appendix \ref{section:disorderAverage} for details)  leads us to the saddle-point integral problem
\begin{eqnarray}
\overline{\Gamma}&=&\int\{\mathrm d P \mathrm d\hat{P}\mathrm d\Omega \mathrm d\hat{\Omega}\}\rme^{N\Psi[\{P,\hat{P},\Omega,\hat{\Omega}\}]}\label{eq:GF-recurr-sp}
\end{eqnarray}
where $\Psi$ is the macroscopic saddle-point surface
\begin{eqnarray}
\Psi&=&\rmi\sum_{\{\vecSpin (t)\}}\hat{P}(\{\vecSpin (t)\})P(\{\vecSpin (t)\})\label{def:saddle-recurr}\\
&&+\rmi\int\{\rmd \mathbf{\hat{h}}(t)\}\;\rmd \omega\;\hat{\Omega}(\{\mathbf{\hat{h}}(t)\},\omega)\;\Omega(\{\mathbf{\hat{h}}(t)\},\omega) \nonumber\\
&&+\int\{\mathrm d\mathbf{\hat{h}}(t)\}\;\mathrm d\omega\;\Omega(\{\mathbf{\hat{h}}(t)\},\omega)\sum_{\{\mathbf{S}_j(t)\}}\left\{\prod_{j=1}^k P(\{\vecSpin_j(t)\})\right\}
\nonumber\\
&&\times\overline{\left(\rme^{-\rmi\sum_{t=0}^{t_{max}\!-\!1}\!
\sum_{\gamma =1}^{2}  \hat{h}^{\gamma }(t)\;\alpha(S_{1}^{\gamma}(t),\ldots,S_{k}^{\gamma}(t))-\rmi\omega} -1\right)}^{\;\alpha}-\frac{1}{N}\log Z_A\nonumber\\
&&+\log\sum_{\{\vecSpin (t)\}}\int\{\mathrm d \mathbf{h}(t)\mathrm d \mathbf{\hat{h}}(t)\}\int_{-\pi}^{\pi}\frac{\mathrm d \omega}{2\pi}M[\{\vecSpin (t)\},\{\mathbf{h}(t)\}\vert\{\mathbf{\hat{h}}(t)\}, \omega,\{0\}]
\nonumber
\end{eqnarray}
using the notation $\vecSpin(t)=(S^{1}(t),S^{2}(t))$, $\mathbf{h}(t)=(h^1(t),h^2(t))$, $\mathbf{\hat{h}}(t)=(\hat{h}^1(t),\hat{h}^2(t))$; $M$ is an effective single-site measure
\begin{eqnarray}
&&M[\{\vecSpin(t)\},\{\mathbf{h}(t)\}\vert\{\mathbf{\hat{h}}(t)\}, \omega,\{0\}]
=\Prob(S^1(0),S^2(0))\;\rme^{-\rmi \hat{\Omega}(\{\mathbf{\hat{h}}(t)\},\omega)+\rmi\omega-\rmi\hat{P}(\{\vecSpin(t)\})}\nonumber\\
&&\hspace{57mm}\times\prod_{t=0}^{t_{max}\!-\!1}\!
\prod_{\gamma =1}^{2}\left\{\;\rme^{\rmi  \hat{h}^{\gamma}(t) h^{\gamma}(t)}\frac{\rme^{\beta S^\gamma(t\!+\!1)h_{i}^{\gamma}(t)}}{2\cosh[\beta  h_{i}^{\gamma}(t)]}\right\}\label{def:M}
\end{eqnarray}
with $\Prob(S^1(0),S^2(0))= \frac{1}{4}\left (1+S^1(0)m(0)+S^2(0)\hat m(0)+S^1(0)S^2(0)C_{12}(0)\right)$. The generating fields $\vecPsi, \vecPsihat$ have been removed from the above as they are not needed in the remainder of this calculation. In the limit of $N\rightarrow\infty$ the integral (\ref{eq:GF-recurr-sp}) is dominated by the extremum points of the functional $\Psi$. The functional variation of $\Psi$ with respect to the order parameters $\{P,\hat P, \Omega, \hat\Omega\}$ leads us to the saddle-point equations
 \begin{eqnarray}
&&P(\{\vecSpin(t)\})=\left\langle\prod_{t=0}^{t_{max}\!-\!1}\delta[\vecSpin (t); \vecSpin^\prime (t)]\right\rangle_{M}\label{eq:SP1-recurr}\\
&&\hat{P}(\{\vecSpin (t)\})=\rmi\!\sum_{i=1}^k\sum_{\{\mathbf{S}_j(t)\}}\delta[\{\vecSpin (t)\};\{\vecSpin_i(t)\}]\left\{\prod_{j\neq i}^k P(\{\vecSpin_j(t)\})\right\}\label{eq:SP2-recurr}\\
&&\hspace{17mm}\times\int\{\mathrm d\mathbf{\hat{h}}(t)\}\;\mathrm d\omega\;\Omega(\{\mathbf{\hat{h}}(t)\},\omega) \;\overline{\rme^{-\rmi\sum_{t=0}^{t_{max}\!-\!1}\!
\sum_{\gamma =1}^{2}  \hat{h}^{\gamma }(t)\;\alpha(S_{1}^{\gamma}(t),\ldots,S_{k}^{\gamma}(t))-\rmi\omega}}^{\;\alpha}    \nonumber\\
&&\Omega(\{\mathbf{\hat{h}}(t)\},\omega)=\left\langle\left[\prod_{t=0}^{t_{max}\!-\!1}\delta(\mathbf{\hat{h}}(t)-\mathbf{\hat{h}}^\prime(t))\right]\delta(\omega-\omega^\prime)\right\rangle_{M}\label{eq:SP3-recurr}\\
&&\hat{\Omega}(\{\mathbf{\hat{h}}(t)\},\omega)\!=\!\rmi\!\!\!\!\sum_{\{\mathbf{S}_j(t)\}}\prod_{j=1}^k P(\{\vecSpin_j(t)\})\;\overline{\rme^{-\rmi\sum_{t=0}^{t_{max}\!-\!1}\!
\sum_{\gamma =1}^{2}  \hat{h}^{\gamma }(t)\;\alpha(S_{1}^{\gamma}(t),\ldots,S_{k}^{\gamma}(t))-\rmi\omega}}^{\;\alpha},\label{eq:SP4-recurr}
\end{eqnarray}
where $\left\langle \ldots\right\rangle_M$ is average generated from the single-site measure~(\ref{def:M}). In Appendix~\ref{section:solOfSP} we show that the conjugate order parameter $\hat P$ is a constant. Using this result the saddle-point equation~(\ref{eq:SP4-recurr}) in the single-site measure~(\ref{def:M}) leads us to the main equation of this paper
\begin{eqnarray}
&&\Prob( \vecSpin,\vecSpinhat)\!=\!\Prob(S(0),\hat{S}(0))\!\!\!\!\sum_{\{\mathbf{S}_j,\mathbf{\hat{S}}_j\}}\!\prod_{j=1}^{k}\!\left[\Prob(\vecSpin_j,\vecSpinhat_j)\right]\nonumber\\
&&\times \overline{\prod_{t=0}^{t_{max}-1}\!\!\!\!\Prob_{\alpha}(S(t\!+\!1)\vert S_{1}(t),..,S_{k}(t))\;\Prob_{\alpha}(\hat{S}(t\!+\!1)\vert \hat{S}_{1}(t),..,\hat{S}_{k}(t))}^{\;\alpha}.\label{eq:M}
\end{eqnarray}
The physical meaning of~(\ref{eq:M}) is revealed by $\Prob(\vecSpin,\vecSpinhat)\!=\!\lim_{N\rightarrow\infty}\frac{1}{N}\sum_{i=1}^N\overline{\langle\delta[\vecSpin;\vecSpin_i]\,\delta[\vecSpinhat;\vecSpinhat_i]\rangle}$, i.e. the disorder-averaged joint probability of single-spin trajectories $\vecSpin$ and $\vecSpinhat$ in the two systems. Equation~(\ref{eq:M}) can be used to compute the macroscopic observables (\ref{def:observ}). To demonstrate how this can be done we derive explicitly the expression for the two time correlation $C(t,s)$
\begin{eqnarray}
\label{eq:C12example}
&&\sum_{\mathbf{S},\mathbf{\hat{S}}}\Prob( \vecSpin,\vecSpinhat)\;S(t^\prime\!+\!1)\;S(t^{\prime\prime}\!+\!1)\\
&&=\!\sum_{\mathbf{S},\mathbf{\hat{S}}}\Prob(S(0),\hat{S}(0))\sum_{\{\mathbf{S}_j,\mathbf{\hat{S}}_j\}}\!\prod_{j=1}^{k}\!\left[\Prob(\vecSpin_j,\vecSpinhat_j)\right]S(t^\prime\!+\!1)\;S(t^{\prime\prime}\!+\!1)\nonumber\\
&&\times \overline{\prod_{t=0}^{t_{max}-1}\Prob_{\alpha}(S(t\!+\!1)\vert S_{1}(t),..,S_{k}(t))\;\Prob_{\alpha}(\hat{S}(t\!+\!1)\vert \hat{S}_{1}(t),..,\hat{S}_{k}(t))}^{\;\alpha}\nonumber\\
&&=\!\!\sum_{\{S_{j}(t^\prime),\;{S}_{j}(t^{\prime\prime}\}}\!\prod_{j=1}^{k}\!\left[\Prob(S_{j}(t^\prime),{S}_{j}(t^{\prime\prime})\right]\nonumber\\
&&\times \sum_{S(t^\prime\!+\!1),S(t^{\prime\prime}\!+\!1)}\overline{\Prob_{\alpha}(S(t^\prime\!+\!1)\vert S_{1}(t^\prime),..,S_{k}(t^\prime))\;\Prob_{\alpha}({S}(t^{\prime\prime}\!+\!1)\vert {S}_{1}(t^{\prime\prime}),..,{S}_{k}(t^{\prime\prime}))}^{\;\alpha}\nonumber\\
&&\times \;S(t^\prime\!+\!1)\;S(t^{\prime\prime}\!+\!1)=C(t^\prime\!+\!1,t^{\prime\prime}\!+\!1)\nonumber~;
\end{eqnarray}
note that many of the variables in the summation over $\Prob(\vecSpin,\vecSpinhat)$ are redundant, they have been introduced for methodological reasons but vanish during the derivation. Carrying out a similar derivation for the other order parameters one obtains a closed set of iterative equations:
\begin{eqnarray}
&&m(t\!+\!1)=f_\alpha(m(t))\!=\!\tanh(\beta)\sum_{S}\prod_{j=1}^{k}\left[\frac{1\!+\! S_jm(t)}{2}\right] \overline{\alpha(S)}^{\;\alpha}\label{eq:m}\\
&&C(t\!+\!1,s\!+\!1)\!=\!F_\alpha(m(t),m(s),C(t,s))\nonumber\\
&&\!=\!\tanh^2(\beta)\sum_{S,\hat{S}}\!\prod_{j=1}^{k}\!\left[\!\frac{1\!+\! S_j m(t)\!+\!\hat{S}_j m(s) \!+\! \!S_j\hat  S_j C(t,s)}{4}\right] \overline{ \alpha(S)\;\alpha(\hat{S})}^{\;\alpha}\label{eq:Corr}\\
&&C_{12}(t\!+\!1)=F_\alpha(m(t),\hat m(t),C_{12}(t)),\label{eq:overlap}
\end{eqnarray}
where $S=(S_1,\ldots,S_k)$ and the equation for $\hat m(t)$ is the same as (\ref{eq:m}).

\section{Results\label{section:results}}
In this section, we first apply the equations~(\ref{eq:m})-(\ref{eq:overlap}) to the recurrent Boolean networks with thermal noise. We recover results of the annealed approximation for the order parameters $m$ and $C_{12}$. However, the two-time correlation function $C(t,s)$, computed here for the first time, allows us to study properties of the stationary states.  Furthermore, the exactness of our method allows us to derive a rigorous upper bound on the noise level above which the system is always ergodic. In addition, we use the equation (\ref{eq:M}) to study models with strong memory effects where the annealed approximation method is no longer valid.

\subsection{Boolean networks}

\subsubsection{Stationary states}

It is clear from the results (\ref{eq:m})-(\ref{eq:overlap}) and (\ref{eq:M}) that the evolution of all many-time single-site correlation functions is driven by the magnetization $m(t)$. A similar scenario was observed in recurrent asymmetric neural networks~\cite{MimuraAndCoolen} which have a similar topology but uses different update functions than model (\ref{def:algorithm}) . The asymmetric neural network model can be seen as a special case of the N-k model when only \emph{linear threshold  Boolean functions} are used and the thermal noise enters into the system via randomness in the thresholds. Furthermore, for the stationary  solution $m\!=\!f_\alpha(m)$ ($m\!=\!\lim_{t\rightarrow\infty} m(t)$) the solution of $q\!=\!F_\alpha(m,m,q)$ [here $q\!=\!\lim_{t\rightarrow\infty}\lim_{t_w\rightarrow\infty}C(t\!+\!t_w,t_w)$ is the Edwards-Anderson order parameter, used in disordered systems~\cite{BOOK} to detect the spin glass (SG) phase where $m\!=\!0$ and $q\!\neq\!0$] and $C_{12}\!=\!F_\alpha(m,m,C_{12})$ are identical. This was also observed in asymmetric neural networks~\cite{adnn} and because of the equality $q\!=\!C_{12}$ there is only one average distance $\frac{1}{2}(1\!-\!q)$ on the attractor~\cite{adnn} and all points in the basin of attraction uniformly cover the stationary states (see Figure \ref{fig:2}).
\begin{figure}
\begin{center}
\setlength{\unitlength}{0.25mm}
\includegraphics[width=280\unitlength]{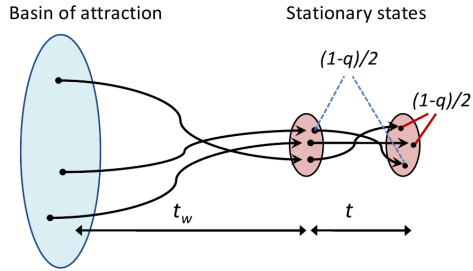}
\caption{Topology of the basin of attraction with its stationary states. \label{fig:2}}
\end{center}
\end{figure}

\subsubsection{Annealed model\label{section:AnnModel}}
The annealed approximation method, where connectivities and Boolean functions change at each time step of the process (\ref{def:algorithm}), provides \emph{identical results} for $m$ and $C_{12}$ to those of (\ref{eq:m}) and (\ref{eq:overlap})~\cite{Kesseli}. However, within annealed approximation the two-time correlations take the form $C(t,s)\!=\!m(t)m(s)$, where $t\!>\!s$, which is the solution of (\ref{eq:Corr}) \emph{only when networks are constructed from a single Boolean function.} This result follows from the equality $C(t,0)\!=\!m(t)m(0)$, which is clear from the equations (\ref{eq:M}) and  (\ref{eq:C12example}), and the fact that for a single function, in the absence of an average over $\alpha$, the joint probability of two spins in the equation (\ref{eq:Corr}) factorizes when $C(t,s)\!=\!m(t)m(s)$.

The classical annealed approximation result~\cite{Derrida} for RBN, which is exact in this case~\cite{DerridaAndW,Hilhorst}, can be easily recovered from  the equations (\ref{eq:m})-(\ref{eq:overlap}) using the property $ \overline{\alpha(S)}^{\;\alpha}\!=\!0$ for all $S\!\in\!\{-1,1\}^k$ and $\overline{\alpha(S)\alpha(\hat S)}^{\;\alpha}\!=\!0$, for all $S\!\neq\!\hat S$ where the $\alpha$ average is taken over all Boolean functions with equal weight.  In the noisy case ($\beta\!<\!\infty$), the magnetization variable $m(t)\!=\!0$ for all $t\!>\!0$ and $q=\tanh^2(\beta)(\frac{1\!+\!q}{2})^k$, corresponding to the stationary solution of (\ref{eq:Corr}), has one stable solution $q\!\neq\!0$ for all finite $\beta>0$ and $k$. For $\beta\!\rightarrow\!\infty$ (no noise), there is a transition from one stable solution $q=1$ for $k\!\leq\!2$ to two solutions $q\!=\!1$ (unstable) and $q\!\neq\!0$ (stable) for $k\!>\!2$~\cite{Derrida}.

\subsubsection{Noise upper bound}

An interesting question related to the ergodicity and phase transitions is whether the system~(\ref{def:algorithm}) can retain information about its initial state in the presence of noise. This question has received a considerable attention in the works on cellular automata~\cite{Pippenger} and in a closely related field of fault-tolerant computation~\cite{Evans:MTNK,NoisyPRL}.

The unordered paramagnetic (PM) phase $m\!=\!0$, where no information is retained, is a fixed point of~(\ref{eq:m}) only when  $\sum_S \overline{\alpha(S)}^{\;\alpha}\!=\!0$.
\begin{proposition}\label{prop:1}
The point $m\!=\!0$ is a stable and unique solution of (\ref{eq:m}) when $\tanh\beta\!<\!\left\{2^{k\!-\!1}/k\binom{k\!-\!1}{(k\!-\!1)/2}; 2^{k\!-\!2}/(k\!-\!1)\binom{k\!-\!2}{(k\!-\!2)/2}\right\}\!\equiv \!b(k)$ for $k$ odd and even respectively.
\end{proposition}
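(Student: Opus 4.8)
Write $\bar\alpha(S):=\overline{\alpha(S)}^{\,\alpha}\in[-1,1]$ and $g(m):=f_\alpha(m)/\tanh\beta=\sum_{S}\prod_{j=1}^{k}\tfrac{1+S_jm}{2}\,\bar\alpha(S)$, i.e.\ the expectation of $\bar\alpha$ under the product measure $\mu_m$ on $\{-1,1\}^k$ with one-point marginals $(1\pm m)/2$. Since $m=0$ solves~(\ref{eq:m}) only if $\sum_S\bar\alpha(S)=0$, I take $g(0)=0$ as the standing hypothesis; note also $\|g\|_\infty\le1$, so $f_\alpha$ maps $[-1,1]$ into itself. The plan is to deduce the whole statement from the single estimate
\[
|g(m)|\ \le\ b(k)^{-1}\,|m|\qquad\text{for every }m\in[-1,1].
\]
Granting this: for $m\neq0$ one has $|f_\alpha(m)|=\tanh(\beta)\,|g(m)|\le\tanh(\beta)\,b(k)^{-1}\,|m|<|m|$ because $\tanh\beta<b(k)$, so $m=0$ is the only fixed point of~(\ref{eq:m}) in $[-1,1]$; and since $g$ is a polynomial, $|f_\alpha'(0)|=\tanh(\beta)\,\lim_{m\to0}|g(m)/m|\le\tanh(\beta)\,b(k)^{-1}<1$, so $m=0$ is (linearly, and in fact globally) stable.

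To prove the estimate, I first use the symmetries $\bar\alpha\mapsto-\bar\alpha$ and $\bar\alpha(\cdot)\mapsto\bar\alpha(-\cdot)$ — each preserving the constraints $\|\bar\alpha\|_\infty\le1$ and $\sum_S\bar\alpha(S)=0$, and sending $g(m)$ to $-g(m)$ and to $g(-m)$ — to reduce to showing $g(m)\le b(k)^{-1}m$ for $m\in(0,1]$. For fixed such $m$, $g(m)$ is a linear functional of $\bar\alpha$ over the polytope $\{\bar\alpha:\|\bar\alpha\|_\infty\le1,\ \sum_S\bar\alpha(S)=0\}$; substituting $\bar\alpha=2y-\mathbf 1$ with $y\in[0,1]^{2^k}$ and $\sum_S y_S=2^{k-1}$ turns this into maximising $2\sum_S\mu_m(S)\,y_S-1$, whose greedy (continuous-knapsack) solution sets $y_S=1$ on the $2^{k-1}$ points of largest weight $\mu_m(S)$. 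As $m>0$ makes $\mu_m(S)$ strictly increasing in the number of $+1$'s in $S$, this optimiser is $\sgn\!\big(\sum_{i=1}^{k}S_i\big)$ for $k$ odd and $\sgn\!\big(\sum_{i=1}^{k-1}S_i\big)$ for $k$ even (for $k$ even the equal-weight middle layer is split evenly, which is precisely what the $(k-1)$-variable majority does). Hence $g(m)\le h(m):=\mathbb{E}_{\mu_m}[\mathrm{Maj}_\kappa]$, where $\kappa=k$ for $k$ odd and $\kappa=k-1$ for $k$ even.

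It remains to bound $h$. Writing $p=(1+m)/2$ gives $h(m)=2\,\Pr[\mathrm{Bin}(\kappa,p)\ge\tfrac{\kappa+1}{2}]-1$; differentiating with the telescoping identity $\frac{d}{dp}\Pr[\mathrm{Bin}(\kappa,p)\ge r]=\kappa\binom{\kappa-1}{r-1}p^{r-1}(1-p)^{\kappa-r}$, together with $dp/dm=\tfrac12$ and $p(1-p)=(1-m^2)/4$, yields
\[
h'(m)=\frac{\kappa}{2^{\kappa-1}}\binom{\kappa-1}{(\kappa-1)/2}\,(1-m^2)^{(\kappa-1)/2}=b(k)^{-1}\,(1-m^2)^{(\kappa-1)/2},
\]
the prefactor being exactly $b(k)^{-1}$ — this is where the two cases in the definition of $b(k)$ originate. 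Since $h(0)=\mathbb{E}_{\mu_0}[\mathrm{Maj}_\kappa]=0$ by symmetry and $0\le h'\le b(k)^{-1}$ on $[0,1]$, integrating gives $h(m)\le b(k)^{-1}m$. Together with the previous paragraph and the symmetry reduction, this proves the estimate and hence the proposition.

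The only real difficulty is identifying the right quantity to control. A Lipschitz/contraction bound on $f_\alpha$ is too weak, since $\sup_m|g'(m)|$ can exceed $b(k)^{-1}$ even when $\sum_S\bar\alpha(S)=0$ (e.g.\ for an AND-dominated $\bar\alpha$), and so can a crude $\ell_1$-Fourier bound on $g(m)/m=\sum_{d\ge1}m^{d-1}\sum_{|T|=d}\widehat{\bar\alpha}(T)$. One has to bound $g(m)/m$ itself, recognise that for each fixed $m$ it is maximised by a majority function, and then the whole estimate collapses onto the one-line Binomial-tail derivative identity above; the endpoints $m=\pm1$ need no extra work, since there $|g(\pm1)|=|\bar\alpha(\pm\mathbf 1)|\le1\le b(k)^{-1}$.
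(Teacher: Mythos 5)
Your proof is correct, and it arrives at the same extremal object as the paper---the majority function---but by a genuinely different route at both stages. The paper establishes the domination $f_\chi(m)\ge f_\alpha(m)$ on $[0,1)$ (with $\chi$ the majority, tie-broken in a balanced way for $k$ even) by explicitly rearranging the difference $f_\chi-f_\alpha$ into a manifestly nonnegative sum, using the fixed-point condition $\sum_S\overline{\alpha(S)}^{\;\alpha}=0$ as a ``representation of zero''; it then imports the quantitative behaviour of $f_\chi$ near $\tanh\beta=b(k)$ from Lemma~1 of Evans and Schulman. You instead note that for fixed $m>0$ the quantity $g(m)$ is a linear functional of $\overline{\alpha}$ over the polytope $\{\|\overline{\alpha}\|_\infty\le1,\ \sum_S\overline{\alpha}(S)=0\}$, so the greedy/continuous-knapsack argument immediately identifies $\mathrm{Maj}_k$ (resp.\ $\mathrm{Maj}_{k-1}$ for $k$ even, which is one admissible balanced tie-break of the paper's $\chi$) as the maximiser, and you then prove the majority bound self-containedly via the binomial-tail derivative identity, which is exactly where $b(k)$ originates. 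Your version buys two things: the domination step is transparent rather than a line-by-line verification, and the two-sided estimate $|g(m)|\le b(k)^{-1}|m|$ (obtained by also applying the symmetry $\overline{\alpha}\mapsto-\overline{\alpha}$) delivers uniqueness and global stability of $m=0$ in one stroke, whereas the paper's one-sided inequalities on $[0,1)$ and $(-1,0]$ by themselves only exclude nonzero fixed points and need the same symmetry remark before stability of the discrete iteration follows.
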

\begin{proof}
To prove this we first find a Boolean function $\chi$ such that $f_\chi(m)\!\geq\! f_\alpha(m)$ when $m\!\in\![0,1)$ and $f_\alpha(m)\!\geq\! f_\chi(m)$ when $m\!\in\!(-1,0]$. It turns out that any function from the set $\chi(S)\!=\!\sgn[\sum_{j=1}^k S_j]\!+\!\delta[0;\sum_{j=1}^k S_j]\gamma(S)$, where $\gamma(S)\in\{\!-\!1,1\}$ and such that $\sum_S\delta[0;\sum_{j=1}^k S_j]\gamma(S)\!=\!0$\footnote{We use the convention $\sgn[0]\!=\!0$ throughout this paper.} satisfies these properties. To show this we define the average $\left\langle \cdots\right\rangle_{S\vert m}=\sum_{S}\prod_{j=1}^{k}\left[\frac{1\!+\! S_jm}{2}\right](\cdots)$ and use the shorthand notations $\left\{\1_+[S],\1_-[S],\1_0[S]\right\}$ for the indicator functions $\{\1[\sum_{j=1}^k S_j\!>\!0],\1[\sum_{j=1}^k S_j\!<\!0],\1[\sum_{j=1}^k S_j\!=\!0]\}$.  Then we compute the difference $\Delta(m)=(f_\chi(m)-f_\alpha(m))/4\tanh(\beta)$ as follows
\begin{eqnarray}
\Delta(m)&=&\frac{1}{4}\left\langle \sgn\left[\sum_{j=1}^k S_j\right]-\overline{\alpha(S)}^{\;\alpha}\right\rangle_{S\vert m}\label{eq:difference}\\
&=&\frac{1}{4}\left\langle \1_+\left[S\right]-\1_-\left[S\right]
-\left(\1_+\left[S\right]+\1_-\left[S\right]+\1_0\left[S\right]\right)\overline{\alpha(S)}^{\;\alpha}\right\rangle_{S\vert m}\nonumber\\
&=&\frac{1}{2}\left\langle \1_+\left[S\right]\frac{1}{2}(1\!-\!\overline{\alpha(S)}^{\;\alpha})-\1_-\left[ S\right]\frac{1}{2}(1\!+\!\overline{\alpha(S)}^{\;\alpha})-\frac{1}{2}\1_0\left[S\right]\overline{\alpha(S)}^{\;\alpha}\right\rangle_{S\vert m}\nonumber\\
&=&\frac{1}{2}\left\langle
\1_+\left[S\right]\overline{\1\left[\alpha(S)\!=\!-\!1\right]}^{\;\alpha}
\!-\!\1_-\left[S\right]\overline{\1\left[\alpha(S)\!=\!+\!1\right]}^{\;\alpha}\!-\!\frac{1}{2}\1_0\left[S\right] \overline{\alpha(S)}^{\;\alpha}\right\rangle_{S\vert m}\nonumber\\
&=&\frac{1}{2}\left(\left[\frac{1\!+\!m}{2}\right]\left[\frac{1\!-\!m}{2}\right]\right)^{\frac{k}{2}}\Bigg\{\sum_{S}\left[\frac{1\!+\!m}{1\!-\!m}\right]^{\frac{\vert\sum_{j=1}^k S_j\vert}{2}}\1_+\left[S\right]\overline{\1\left[\alpha(S)\!=\!-\!1\right]}^{\;\alpha}\nonumber\\
&\!-\!&\sum_{S}\left[\frac{1\!-\!m}{1\!+\!m}\right]^{\frac{\vert\sum_{j=1}^k S_j\vert}{2}}
\1_-\left[S\right]\overline{\1\left[\alpha(S)\!=\!+\!1\right]}^{\;\alpha}-\frac{1}{2}\sum_{S}\1_0\left[S\right]  \overline{\alpha(S)}^{\;\alpha}\Bigg\},\nonumber
\end{eqnarray}
where in the above we used the equality $\prod_{j=1}^{k}\left[\frac{1\!+\! S_jm}{2}\right]=\left[\frac{1\!+\!m}{2}\right]^{\frac{k+\sum_{j=1}^k S_j}{2}}\left[\frac{1\!-\!m}{2}\right]^{\frac{k-\sum_{j=1}^k S_j}{2}}$. Let us now consider the sum
\begin{eqnarray}
&&\sum_{S}\overline{\alpha(S)}^{\;\alpha}=\sum_{S}\overline{\1\left[\alpha(S)\!=\!+\!1\right]}^{\;\alpha} \!-\! \sum_{S}\overline{\1\left[\alpha(S)\!=\!-\!1\right]}^{\;\alpha}\label{eq:zero}\\
&=&\sum_{S}\left(\1_+\left[S\right]+\1_-\left[S\right]+\1_0\left[S\right]\right)\left(\overline{\1\left[\alpha(S)\!=\!+\!1\right]}^{\;\alpha}-\overline{\1\left[\alpha(S)\!=\!-\!1\right]}^{\;\alpha}\right)\nonumber\\
&=&\sum_{S}\left(\frac{1}{2}\1_0\left[S\right]\overline{\alpha(S)}^{\;\alpha}+\1_-\left[S\right]\overline{\1\left[\alpha(S)\!=\!+\!1\right]}^{\;\alpha}\!-\!\1_+\left[S\right]\overline{\1\left[\alpha(S)\!=\!-\!1\right]}^{\;\alpha}\right)=0.\nonumber
\end{eqnarray}

Adding the above representation of zero to the terms inside the curly brackets in the equation (\ref{eq:difference}) gives

\begin{eqnarray}
\Delta(m)&=&\frac{1}{2}\left(\left[\frac{1\!+\!m}{2}\right]\left[\frac{1\!-\!m}{2}\right]\right)^{\frac{k}{2}}\nonumber\\
&&\times\Bigg\{\sum_{S}\left(\left[\frac{1\!+\!m}{1\!-\!m}\right]^{\frac{\vert\sum_{j=1}^k S_j\vert}{2}}-1\right)\1_+\left[S\right]\overline{\1\left[\alpha(S)\!=\!-\!1\right]}^{\;\alpha}\nonumber\\
&\!+\!&\sum_{S}\left(1-\left[\frac{1\!-\!m}{1\!+\!m}\right]^{\frac{\vert\sum_{j=1}^k S_j\vert}{2}}\right)
\1_-\left[S\right]\overline{\1\left[\alpha(S)\!=\!+\!1\right]}^{\;\alpha}\Bigg\}\label{eq:diffFinal}
\end{eqnarray}
which is clearly $\Delta(m)\geq0$ for $m\!\in\![0,1)$ and $\Delta(m)\leq0$ for $m\!\in\!(-1,0]$.

One can show that the function $f_\chi(m)$, which we used in the bounding procedure (\ref{eq:difference})-(\ref{eq:diffFinal}),  has the following properties:\footnote{This can be done by applying the steps of Lemma 1 in \cite{Evans:MTNK} to the function $f_\chi(m)$.} (i) $m\!>\!f_\chi(m)$ when  $m\!\in\!(0,1)$ and $f_\chi(m)\!>\!m$ when $m\!\in\!(-1,0)$ ($f_\chi(0)\!=\!0$) for $\tanh\beta\!<\!b(k)$; (ii) for $\tanh\beta\!>\!b(k)$ there exists $m^*\in[-1,1]\setminus\{0\}$ such that $f_\chi(m^*)\!=\!m^*$.\qed
\end{proof}

The consequence of Proposition \ref{prop:1} is that the ordered ferromagnetic (FM) phase $m\!\neq\!0$  is a fixed point of (\ref{eq:m}) (if at all) \emph{only} for values of $\beta$ and $k$ which satisfy $\tanh\beta\!>\!b(k)$. This situation leads to the PM/FM phase boundary, depicted in the phase diagram (Figure \ref{fig:3}), which for $k\rightarrow\infty$ approaches $p=1/2$ as $1/2-p(k)=O(1/\sqrt{k})$; this is follows from the Stirling's approximation of $b(k)$. BN constructed from a single Boolean function $\chi(S)$ saturates this boundary. We note that a similar result, for odd $k$ only, have been conjectured using the annealed approximation and multiplexing techniques~\cite{Peixoto}.

\begin{figure}
\begin{center}
\hspace{-30mm}
\setlength{\unitlength}{1.4mm}
\begin{picture}(60,55)
\put(5,0){\epsfysize=45\unitlength\epsfbox{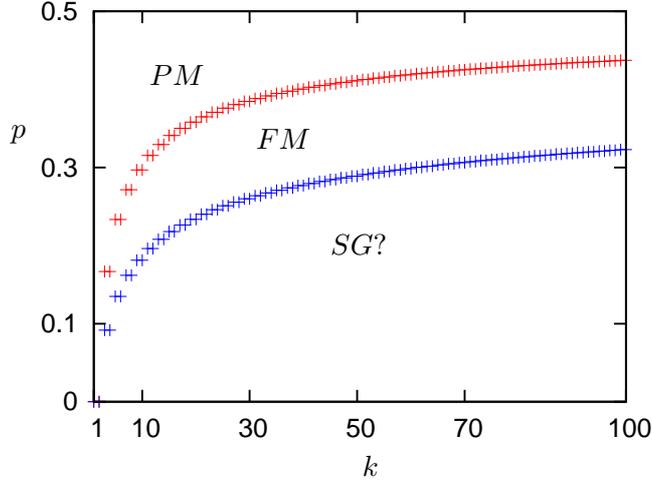}}
\put(7,30){$p$}\put(37,19){$SG?$}\put(30,29){$FM$}\put(20,35){$PM$}
\put(40,-2){$k$}
\end{picture}
\end{center}
\caption{Phase diagram for the recurrent Boolean network governed by (\ref{def:algorithm}). The model is paramagnetic (PM) for any distribution of Boolean functions for the noise parameter $p>(1-b(k))/2$. It can be ferromagnetic (FM) below the boundary $(1-b(k))/2$ and there is a possibility of the spin-glass (SG) phase when  $p<(1-\sqrt{b(k)})/2$.\label{fig:3}}
\end{figure}

In the original RBN model, which we have considered in the section \ref{section:AnnModel}, the stationary state $m\!=\!0$, $q\neq0$ is for any $\beta\in(0,\infty)$. Here we explore a possibility for the system (\ref{def:algorithm}) to have the disordered PM ($m\!=\!0$, $q\!=\!0$) and two ordered FM ($m\!\neq\!0$, $q\!\neq\!0$) and SG ($m\!=\!0$, $q\!\neq\!0$) states. For $\lim_{t\rightarrow\infty}m(t)\!=\!m$, $q\!=\!0$ is a fixed point of (\ref{eq:Corr}) iff $ \overline{\{\sum_S\alpha(S)\}^2}^{\;\alpha}\!=\!0$ which occurs only for \emph{balanced Boolean functions}, with an equal number of $\!\pm\!1$ in the output.
\begin{proposition}\label{prop:2}
 For $m\!=\!0$ the point $q\!=\!0$ is a unique stable solution of (\ref{eq:Corr}) when $\tanh^2\beta\!<\!b(k)$.
\end{proposition}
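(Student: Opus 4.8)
The plan is to follow the scheme used for Proposition~\ref{prop:1}: reduce the claim to a one-dimensional recursion, dominate it by a map built from the ``extremal'' Boolean function $\chi$ already isolated there, and read the threshold off a scalar fixed-point analysis.

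First I would put $m(t)\equiv 0$ in (\ref{eq:Corr}), which collapses the two-time equation to the scalar recursion $C(t\!+\!1,s\!+\!1)=g_\alpha(C(t,s))$ with
\[
g_\alpha(q):=F_\alpha(0,0,q)=\tanh^2(\beta)\sum_{\mathbf{S},\mathbf{\hat S}}\ \prod_{j=1}^{k}\frac{1\!+\!S_j\hat S_j q}{4}\ \overline{\alpha(\mathbf{S})\alpha(\mathbf{\hat S})}^{\;\alpha}.
\]
Iterating $g_\alpha$ along a diagonal and using the identity $C(t,0)=m(t)m(0)=0$ recorded above, the sequence $\lim_{t_w\to\infty}C(t\!+\!t_w,t_w)$ is exactly the limit of the orbit of $g_\alpha$ issued from $q=0$. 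Hence it is enough to show that, under the stated noise bound, $q=0$ is the only fixed point of $g_\alpha$ in $(-1,1)$ and that it is attracting; then $q=\lim_{t\to\infty}\lim_{t_w\to\infty}C(t\!+\!t_w,t_w)=0$ is the unique stable stationary value of (\ref{eq:Corr}) at $m=0$.

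Next I would bound $g_\alpha$ above. Expanding the Boolean functions over subsets $A\subseteq\{1,\ldots,k\}$, $\alpha(\mathbf S)=\sum_A\hat\alpha_A\prod_{j\in A}S_j$ with $\hat\alpha_A=2^{-k}\sum_{\mathbf S}\alpha(\mathbf S)\prod_{j\in A}S_j$, a short computation gives $g_\alpha(q)=\tanh^2(\beta)\sum_A\overline{\hat\alpha_A^2}^{\;\alpha}\,q^{|A|}$; the balance hypothesis ($\sum_{\mathbf S}\alpha(\mathbf S)=0$ for each function in the support, which is what makes $q=0$ a fixed point) kills the $A=\emptyset$ term while $\sum_A\overline{\hat\alpha_A^2}^{\;\alpha}=1$, so $g_\alpha(q)\le\tanh^2(\beta)\,q$ for $q\in[0,1]$. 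The same estimate follows in the style of Proposition~\ref{prop:1}: writing $\frac{1+S_j\hat S_j q}{4}=\frac12\sum_{\sigma_j=\pm1}\frac{1+\sigma_jS_j\sqrt q}{2}\cdot\frac{1+\sigma_j\hat S_j\sqrt q}{2}$ turns $g_\alpha(q)$ into a sign-pattern average of squared single-site averages of the balanced functions $\mathbf S\mapsto\alpha(\sigma_1S_1,\ldots,\sigma_kS_k)$, each of which is controlled by the comparison (\ref{eq:difference})--(\ref{eq:diffFinal}) run with the effective function replaced by a generic balanced $\pm1$ function. Combining $g_\alpha(q)\le\tanh^2(\beta)\,q$ with $q\le\phi_\chi(q)$ --- valid because $\phi_\chi(m):=\sum_{\mathbf S}\prod_j\frac{1+S_jm}{2}\,\chi(\mathbf S)$, the expectation of the biased majority $\chi(\mathbf S)=\sgn[\sum_jS_j]+\delta[0;\sum_jS_j]\gamma(\mathbf S)$ of Proposition~\ref{prop:1}, is concave on $[0,1]$ with $\phi_\chi(0)=0$, $\phi_\chi(1)=1$, hence above its chord --- yields the comparison map $g_\alpha(q)\le\tanh^2(\beta)\,\phi_\chi(q)=:G(q)$ on $[0,1)$; the same bounds applied to $|q|$ give $g_\alpha(q)\ge-G(|q|)$ on $(-1,0]$.

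The remaining analysis is scalar. $G$ is non-decreasing and concave on $[0,1)$ with $G(0)=0$ (balance forces $\phi_\chi(0)=2^{-k}\sum_{\mathbf S}\chi(\mathbf S)=0$), and $G'(0)=\tanh^2(\beta)\,\phi_\chi'(0)=\tanh^2(\beta)/b(k)$ because $\phi_\chi'(0)=2^{-k}\sum_{\mathbf S}\bigl|\sum_jS_j\bigr|=1/b(k)$ --- the quantity that produced $b(k)$ in Proposition~\ref{prop:1}. Applying to $q\mapsto G(q)$ the one-dimensional argument of parts (i)--(ii) of Proposition~\ref{prop:1} (the ``steps of Lemma~1 in \cite{Evans:MTNK}''; concavity makes the slope at the origin decisive), $\tanh^2\beta<b(k)$ gives $G(q)<q$ for all $q\in(0,1)$, hence $g_\alpha(q)<q$ on $(0,1)$ and, via the companion bound, $g_\alpha(q)>q$ on $(-1,0)$: thus $q=0$ is the unique fixed point of $g_\alpha$ on $(-1,1)$, and $0\le g_\alpha'(0)\le G'(0)<1$ makes it attracting. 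The orbit from $q_0=0$ therefore stays at $0$, which proves the proposition. (In fact $g_\alpha(q)\le\tanh^2(\beta)\,q$ already forces the conclusion throughout $\tanh^2\beta<1$; the statement is phrased with $b(k)$ to line up with Proposition~\ref{prop:1} and with the SG boundary $p=(1-\sqrt{b(k)})/2$ of Figure~\ref{fig:3}, which is exactly where the comparison map $G$ first acquires a nonzero fixed point.)

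\textbf{Expected main obstacle.} The delicate step is the domination of the \emph{two-time} map $g_\alpha$: one has to organise the sign-pattern decomposition so that Proposition~\ref{prop:1}'s pointwise comparison applies to each piece, and then check that the resulting comparison map $G$ meets the diagonal precisely at $\tanh^2\beta=b(k)$. The square on $\tanh\beta$ (against the bare $\tanh\beta$ of Proposition~\ref{prop:1}) reflects that (\ref{eq:Corr}) is quadratic in the Boolean function, which is also why the SG boundary sits at $p=(1-\sqrt{b(k)})/2$ rather than at $p=(1-b(k))/2$.
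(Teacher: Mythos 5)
Your proposal is correct and reaches the stated conclusion, but by a genuinely different route from the paper. The paper never expands in a Walsh--Fourier basis: it introduces the comparison kernel $T(C)=\tanh^2(\beta)\sum_{S,\hat S}\prod_{j}\left[\frac{1+S_j\hat S_j C}{4}\right]\sgn[S\cdot\hat S]$, reduces it to $\tanh(\beta)f_\chi(C)$ by the change of variables $\sigma_j=S_j\hat S_j$, and then dominates $F_\alpha(0,0,C)$ by $T(C)$ through the same indicator decomposition (\ref{eq:difference})--(\ref{eq:diffFinal}) used for Proposition~\ref{prop:1}, the representation of zero now being supplied by the balance condition $\left(\sum_S\alpha(S)\right)^2=0$. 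Your final comparison map $G(q)=\tanh^2(\beta)\phi_\chi(q)$ coincides with the paper's $T(q)$, so the endgame is identical; what differs is how you get under it. Your Walsh identity $F_\alpha(0,0,q)=\tanh^2(\beta)\sum_A\overline{\hat\alpha_A^2}^{\;\alpha}q^{|A|}$, with $\overline{\hat\alpha_\emptyset^2}^{\;\alpha}=0$ by balance and $\sum_A\overline{\hat\alpha_A^2}^{\;\alpha}=1$ by Parseval, is correct and immediately yields the linear bound $|F_\alpha(0,0,q)|\le\tanh^2(\beta)|q|$, which already proves the proposition --- and in fact proves more than is stated: $q=0$ is the unique, globally attracting fixed point for \emph{every} finite $\beta$, not only for $\tanh^2\beta<b(k)$. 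This is a genuine strengthening, since the paper's kernel $\sgn[\sum_j S_j\hat S_j]$ is not of the product form $\alpha(S)\alpha(\hat S)$ for any Boolean $\alpha$, so the bound $T$ is not saturated within the admissible class and the threshold $\tanh^2\beta=b(k)$ is not tight for the $m=0$ correlation map; your observation bears directly on the open question raised in the Discussion of whether the SG region of Figure~\ref{fig:3} is ever realized. Two points to tighten: the subsequent detour through $q\le\phi_\chi(q)$ deliberately weakens your bound and relies on the asserted but unproved concavity of $\phi_\chi$ on $[0,1]$ --- it can simply be dropped; and the ``sign-pattern/$\sqrt q$'' rederivation is too sketchy to stand on its own, though it is not needed once the Fourier identity is in place.
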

\begin{proof}
In order to show this we first define the function $T(C)\!=\!\tanh^2(\beta)\sum_{S,\hat{S}}\!\prod_{j=1}^{k}\!\left[\!\frac{1+S_j\hat  S_j C}{4}\right]\sgn[S\cdot\hat{S}]$ which is related to the function $f_\chi$ via the equality $T(C)=\tanh(\beta)f_\chi(C)$. This property follows from the calculation
\begin{eqnarray}
T(C)&=&\tanh^2(\beta)\sum_{S,\hat{S}}\!\prod_{j=1}^{k}\!\left[\!\frac{1+S_j\hat  S_j C}{4}\right]\sgn[S.\hat{S}]\\
&=&\tanh^2(\beta)\sum_{S}\!\prod_{j=1}^{k}\!\left[\!\frac{1+S_jC}{2}\right]\sgn[\sum_{j=1}^k S_j]=\tanh(\beta)f_\chi(C).\nonumber
\end{eqnarray}
Next we define the function $g(C)\!=\!\tanh^2(\beta)\sum_{S,\hat{S}}\!\prod_{j=1}^{k}\!\left[\!\frac{1+S_j\hat  S_j C}{4}\right]\alpha(S)\alpha(\hat{S})$, where $\alpha$ is an arbitrary balanced Boolean function, and compute the difference $\Delta(C)=(T(C)-g(C))/4\tanh^2(\beta)$ using the same steps as described in the equations (\ref{eq:difference})-(\ref{eq:diffFinal}). The result of this computation is that $\Delta(C)\geq0$ and $\Delta(C)\leq0$ on the intervals $C\!\in\![0,1)$ and $C\!\in\!(-1,0]$, respectively, from which the bounds $T(C)\geq F_\alpha(0,0,C)$ and $T(C)\leq F_\alpha(0,0,C)$ on the same intervals follow. The behavior of $T[C]$ with respect to the inverse temperature $\beta$ is the same as of $f_\chi$, which we described in the proof of Proposition \ref{prop:1}, but with the $\tanh(\beta)$ being replaced by the $\tanh^2(\beta)$.\qed

\end{proof}
From the Proposition \ref{prop:2} the case of $m\!=\!0$, $q\!\neq\!0$ and finite  $\beta$ occurs only (if at all) when $\tanh^2\beta\!>\!b(k)$. The resulting FM/SG phase boundary (see Figure~\ref{fig:3}) approaches $p=1/2$ as $1/2-p(k)=O(k^{-1/4})$ when $k\rightarrow\infty$. The $\alpha$-averages in equations (\ref{eq:m})-(\ref{eq:Corr}) can be computed for a uniform distribution over all balanced Boolean functions to obtain $m(t)\!=\!0$ for all $t\!>\!0$, which implies $q\!=\!\tanh^2(\beta)\left((\frac{1\!+\!q}{2})^k(1\!+\!\frac{1}{2^k\!-\!1})\!-\!\frac{1}{2^k\!-\!1}\right)$. The latter has only one $q\!=\!0$ trivial solution for any finite $\beta$ and develops a second $q\!=\!1$ solution only for $\beta\!\rightarrow\!\infty$. Thus, only the model (\ref{def:algorithm}) with non-uniform distributions over the balanced Boolean functions can have the critical behavior as in Figure \ref{fig:3}.

It is interesting that the upper bound $b(k)$ computed here for $k$ odd is identical to the one computed for noisy Boolean formulas~\cite{Evans:MTNK}. A noisy Boolean formula is a tree in which leaves are either Boolean constants or references to arguments, internal nodes are noisy Boolean functions (for each function-input there is an error probability $p$ which inverts the function-output) and the root corresponds to the formula output. The MAJ-$k$ function, which plays a prominent role in the area of fault-tolerant computation (FTC) as it allows to correct the errors by its majority-vote function~\cite{VonNeumann}, saturates the bound $b(k)$. The idea to have two copies of the same system, used in this work only to study initial-states perturbations,  is also useful for FTC as it allows to compare the noisy system against its noiseless counterpart~\cite{NoisyPRL}.

The connection of our work with FTC stems from the fact that each site $i$ at time $t$ in our model can be associated with the output $S_i(t)$ of a $k$-ary Boolean formula of depth $t$  which computes a function of the associated initial states (a subset of $\{S_i(0)\}$)~\cite{DerridaAndW}. In the presence of noise, a formula of considerable depth (large $t$) loses all input information for  $\tanh\beta\!<\!b(k)$ and odd $k$~\cite{Evans:MTNK}. This suggests that the upper bound $b(k)$, for odd  $k$, is more general and is valid for transitions at \emph{all} $m$ values identifying the point where stationary states depend on the initial states and ergodicity breaks.  For $k$ even such \emph{general} threshold is not yet known.

\subsection{Boolean networks with memory}
In model~(\ref{def:algorithm}) the state of site $i$ at time $t$ depends on its states at previous times only indirectly via the sites affected by the state of site $i$ at previous times. These dependencies create correlations via the directed loops in the time-space picture of BN (as in Figure~\ref{fig:1}), but in the limit of  $N\!\rightarrow\!\infty$ they become very weak, as was argued in previous works in this area~\cite{DerridaAndW}. This allows one to express the observables of interest (\ref{def:observ})  in the closed form (\ref{eq:m})-(\ref{eq:overlap}). However, in a broad family of models, which includes the Boolean networks with reversible computation~\cite{Revers} and gene networks with self-regulation~\cite{DeSales}, the state of a site $i$ at a time $t\!+\!1$ depends directly on its state at time $t$.

\subsubsection{Random threshold networks}
An exemplar model with strong memory effects, which was used in~\cite{Li} to construct a model of cell-cycle regulatory network ($N\!=\!11$) of budding yeast, is of the form
\begin{equation}
S_i(t\!+\!1)\!=\!\sgn[h_i(t)\!-\!2h]\!+\!S_i(t)\delta[h_i(t);2h]\label{eq:process},
\end{equation}
where $h_i(t)\!=\!\sum_{j=1}^k\xi_{i_j}(1\!+\!S_{i_j}(t))$ and $\xi_{i_j}\!\in\!\{\!-\!1,1\}$. Mean-field theory ($N\!\rightarrow\!\infty$) was derived~\cite{RTN} using the annealed approximation in a variant of this model, where the interactions $\{\xi_j\}$ were randomly distributed $\Prob(\xi_j\!=\!\pm1)\!=\!1/2$. Significant discrepancies between the theory and simulation results has been pointed out~\cite{RTN} for  integer $h$ values (in this case it is possible that $2h\!=\!h_i(t)$), which was attributed to the presence of strong memory effects. Refinements of the annealed approximation method improved the results obtained only slightly~\cite{Heckel,Zanudo} but break down in most of the parameter space.

This model (\ref{eq:process}) can be easily incorporated into our theoretical framework. The result of the GFA (\ref{eq:M}) for this process (with thermal noise) can be obtained by replacing the  average $\overline{(\cdots)}^\alpha$ by $\overline{(\cdots)}^\xi$ and the probability function $\Prob_{\alpha}(S(t\!+\!1)\vert S_{1}(t),..,S_{k}(t))$ by
\begin{eqnarray}
\Prob_{\xi}(S(t\!+\!1)\vert S(t);S_1(t),..,S_k(t))=\frac{\rme^{\beta S(t\!+\!1)\{\sgn[h(t)\!-\!2h]\!+\!S(t)\delta[h(t);2h]\}}}{2\cosh\beta\{\sgn[h(t)\!-\!2h]\!+\!S(t)\delta[h(t);2h]\}}\label{def:ProbRTN},
\end{eqnarray}
where $h(t)\!=\!\sum_{j=1}^k\xi_{j}(1\!+\!S_{j}(t))$.

In the case of $h\!\in\!\mathbb{R}$, the probability function (\ref{def:ProbRTN}) is independent of $S(t)$  and equations (\ref{eq:m})-(\ref{eq:overlap}) have the same structure as model~(\ref{eq:process}): the $\alpha$-averages $\overline {\alpha(S)}^\alpha$ and $\overline{\alpha(S)\alpha(\hat S)}^\alpha$ are replaced by the averages $\overline{\sgn[h(t)\!-\!2h]}^\xi$ and $\overline{\sgn[h(t)\!-\!2h]\sgn[\hat h(t)\!-\!2h]}^\xi$, respectively. The equation for $m(t)$ recovers the annealed approximation result~\cite{RTN} (using the relation $b(t)\!=\!(1\!+\!m(t))/2$). In Fig.~\ref{fig:4}(a,b), we plot our analytical predictions for the evolution of $m(t)$ and $C(t\!+\!t_w,t_w)$ against the results of Monte Carlo (MC) simulation which use (\ref{eq:process}). The correlation function $C(t\!+\!t_w,t_w)$, in the limit of  $t\!\rightarrow\!\infty,t_w\!\rightarrow\!\infty$, approaches the stationary solution of the overlap function (\ref{eq:overlap}) with increasing $t_w$ as predicted (Fig.~\ref{fig:4}(b)).
\begin{figure}[t]
\setlength{\unitlength}{0.4mm}
\begin{center}
\begin{picture}(350,100)
\put(159,0){\includegraphics[height=100\unitlength,width=160\unitlength]{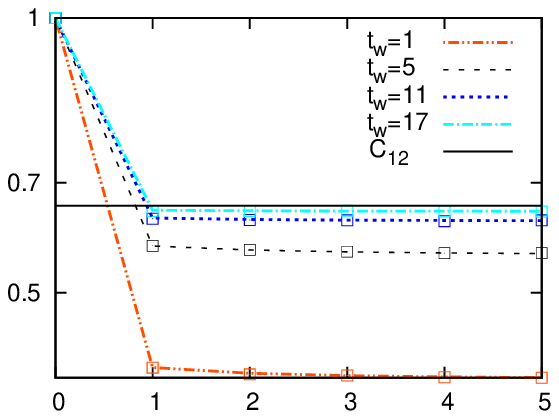}}
\put(245,-7){$t$}
\put(0,0){\includegraphics[height=100\unitlength,width=160\unitlength]{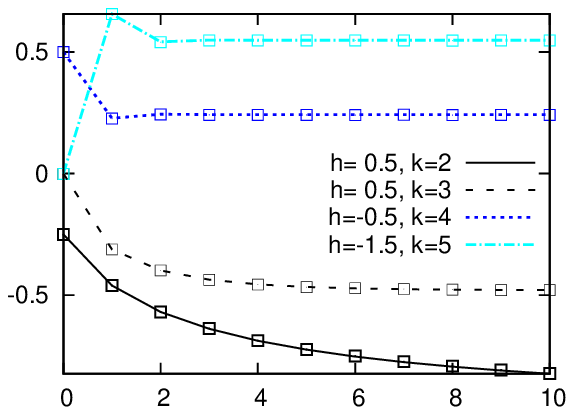}}
 \put(1,65){$m$}
\put(85,-7){$t$}  \put(160,65){$C$}
\put(130,80){$(a)$}\put(290,80){$(b)$}
\end{picture}
\end{center}
\caption{Evolution of the magnetization ($m\equiv m(t)$) and correlation ($C\equiv C(t\!+\!t_w,t_w)$) functions with time $t$ is governed by (\ref{eq:process}). Theoretical results (lines) are plotted against the results of MC simulations (symbols) with $N\!=\!10^5$. Each MC data-point is averaged over 10 runs. Error bars are smaller than symbol size. Evolution of $m$ (a) and $C$ (b) for $h\!\in\!\mathbb{R}$. In (b) we plot $C$ for $h\!=\!0.5$ and $k\!=\!3$. \label{fig:4} 
}
\end{figure}

The situation is very different when $h\!\in\!\mathbb{Z}$. The magnetization $m(t)\!=\!\sum_{\vecSpin}\Prob( \vecSpin)S(t)$, where $\Prob(\vecSpin)$ is a marginal of (\ref{eq:M}) with $P_\alpha\!\rightarrow\! P_\xi$, is no longer closed as in (\ref{eq:m}), but depends on $2^{t\!-\!1}\!-\!1$ macroscopic observables (all magnetization, all multi-time correlations). Thus the number of macroscopic observables that determine the value of  $m(t)$, or any other function computed from (\ref{eq:M}), grows exponentially with time. Annealed approximation results~\cite{RTN} for this model when $h\!\in\!\mathbb{Z}$ are only exact up to $t\!<\!2$ time steps (the equation for $b(1)\!=\!(1\!+\!m(1))/2$ in our approach and in~\cite{RTN} are identical) and deviate significantly from the exact solution at later times (Fig.~\ref{fig:5}(c)). A typical evolution of the correlation function $C(t\!+\!t_w,t_w)$ in the system (\ref{eq:process}) when $h\in\mathbb{Z}$ is shown in Fig.~\ref{fig:5}(d).
\begin{figure}
\setlength{\unitlength}{0.4mm}
\begin{center}
\begin{picture}(350,100)
\put(159,0){\includegraphics[height=100\unitlength,width=160\unitlength]{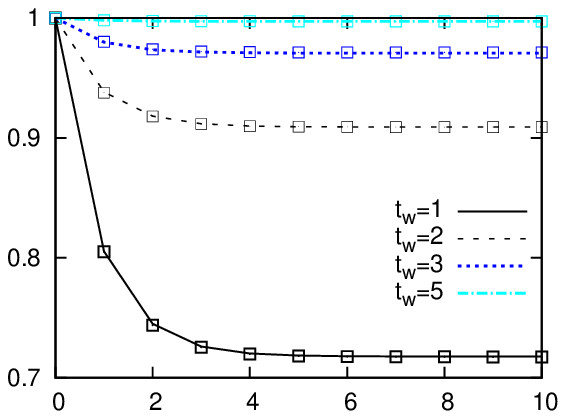}}
\put(245,-7){$t$}
\put(0,0){\includegraphics[height=100\unitlength,width=160\unitlength]{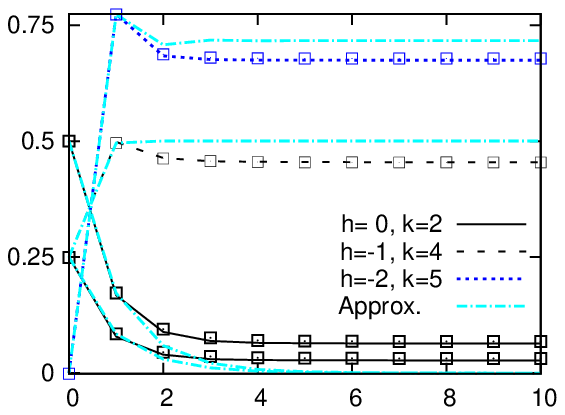}}
 \put(1,65){$m$}
\put(85,-7){$t$}  \put(160,65){$C$}
\put(130,80){$(a)$}\put(290,80){$(b)$}
\end{picture}
\end{center}
\caption{Evolution of $m\equiv m(t)$ (a) and $C\equiv C(t\!+\!t_w,t_w)$ (b) for $h\!\in\!\mathbb{Z}$. In (b) we plot $C$ for $h\!=\!0$ and $k\!=\!2$. Theoretical results (lines) are plotted against the results of MC simulations (symbols) with $N\!=\!10^5$. Each MC data-point is averaged over 10 runs. Error bars are smaller than symbol size.\label{fig:5} 
}
\end{figure}

\section{Discussion}
\label{section:summary}
We applied the generating functional method to analyze the dynamics of recurrent Boolean networks. The analysis resulted in a coupled set of recursive equations for a small number of macroscopic observables that provide an exact description of the dynamics in a broad range of Boolean networks. Based on the analysis we
also showed that for a large class of models the annealed approximation does provide exact results for both magnetization and overlap order parameters; although results for correlation between states at different times are generally incorrect. However, it turns out that in models where the state of spin at time $t$ depends on its state at previous times directly the annealed approximation always fails. This is due to the fact that approximation does not take into account the correlations which are very strong in such models. Comparing the two transition probabilities (\ref{eq:micro}) and (\ref{def:ProbRTN}) for the processes without and with memory, respectively, it is clear from the single-spin trajectory equation (\ref{eq:M}), that as soon as there is an explicit dependence of a spin on its states at previous times, an exponential (in time) number of macroscopic observables will be required.  Furthermore, also in models where this approximation works well it is useful to know the two-time correlations as it provides us with insight into properties of the stationary states.

When one considers systems with thermal noise, the suggested framework provides additional new and interesting  results. We have computed the noise threshold above which the system is always ergodic and   critical noise levels where phase transitions occur. Here the two-time correlation function is particularly useful as it allows one to compute the Edwards-Anderson order parameter $q$, used to detect the spin-glass phase.

One of the remaining questions is to provide an example of a model (or show that it does not exists)  with a phase diagram as in Figure~\ref{fig:3}. The direct computation of $q$ for all balanced Boolean functions with $k\geq3$ inputs is possible for small $k$ but soon becomes intractable as the number of such functions grows exponentially with $k$. The other important question is to find a systematic way to generate good approximations for the dynamics with strong memory effects. Although the theory developed in this work can describe the dynamics of such models exactly it can be used only for relatively short times due to the rapid increase in the number of macroscopic order parameters. However, the equations of our theory  can be used to check the quality of approximations used in such models in all future works and possibly serve as a starting point for such studies. The work undertaken here can be extended in a numerous ways. For instance, one can easily adapt the framework developed here to study Boolean networks with inhomogeneous connectivities~\cite{scaleFreeBN} and to examine different noise models~\cite{NoisyPRE}.

\section*{Acknowledgements}
Support by the Leverhulme trust (grant F/00 250/H) is acknowledged.


\appendix
\section{Computation of the disorder averages\label{section:disorderAverage}}
In this section, we outline the calculation which takes us from the definition of  generating functional (\ref{eq:GF}) to the saddle-point integral (\ref{eq:GF-recurr-sp}). The starting point of this calculation is the generating functional
\begin{eqnarray}
\Gamma[\vecPsi^1;\vecPsi^2]&=&\sum_{\{S_{i}^{1}(t),S_{i}^{2}(t)\}}\Prob(\vecSpin^1(0),\vecSpin^2(0))\prod_{t\!=\!0}^{t_{max}\!-\!1}\!
\prod_{i=1}^N\prod_{\gamma =1}^{2}\frac{\rme^{\beta S_{i}^\gamma(t\!+\!1)h_{i}^{\gamma}(\vecSpin^\gamma(t))}}{2\cosh[\beta  h_{i}^{\gamma}(\vecSpin^\gamma(t))]}\label{eq:GF1}\\
&&\times\exp\left[-\rmi\sum_{t\!=\!0}^{t_{max}}\sum_{i\!=\!1}^N\sum_{\gamma\!=\!1}^2\psi_{i}^{\gamma }(t) S_{i}^{\gamma }(t)\right],
\nonumber
\end{eqnarray}
where in the above we have defined the field-variables $h_{i}^\gamma(\vecSpin^{\gamma}(t))=\sum_{i_1,\ldots,i_k}^N A_{i_1,\ldots,i_k}^{i} \alpha_i(S_{i_1}^{\gamma }(t),\ldots,S_{i_k}^{\gamma }(t))$. Removing these fields from equation~(\ref{eq:GF1}) via the integral representations of unity
\begin{eqnarray}
\prod_{t\!=\!0}^{t_{max}\!-\!1}\prod_{i=1}^{N}\prod_{\gamma =1}^{2}\left\{\int\frac{\mathrm d  h_{i}^{\gamma}(t)\;\mathrm d  \hat{h}_{i}^{\gamma }(t)}{2\pi}\;\rme^{\rmi  \hat{h}_{i}^{\gamma}(t)[ h_{i}^{\gamma}(t)- h_{i}^{\gamma}(\vecSpin^{\gamma}(t))]}\right\}\!=\!1\label{def:unity}
\end{eqnarray}
gives
\begin{eqnarray}
\Gamma[\vecPsi^1;\vecPsi^2]&=&\sum_{\{S_{i}^{1}(t),S_{i}^{2}(t)\}}\Prob(\vecSpin^1(0),\vecSpin^2(0))\exp\left[-\rmi\sum_{t\!=\!0}^{t_{max}}\sum_{i\!=\!1}^N\sum_{\gamma\!=\!1}^2\psi_{i}^{\gamma }(t) S_{i}^{\gamma }(t)\right]\label{eq:GF2}\\
&&\times\prod_{t=0}^{t_{max}\!-\!1}\!
\prod_{i=1}^N\prod_{\gamma =1}^{2}\left\{\int\frac{\mathrm d  h_{i}^{\gamma}(t)\;\mathrm d  \hat{h}_{i}^{\gamma }(t)}{2\pi}\;\rme^{\rmi  \hat{h}_{i}^{\gamma}(t) h_{i}^{\gamma}(t)}\frac{\rme^{\beta S_{i}^\gamma(t\!+\!1)h_{i}^{\gamma}(t)}}{2\cosh[\beta  h_{i}^{\gamma}(t)]}\right\}
\nonumber\\
&&\times\prod_{t=0}^{t_{max}\!-\!1}\!
\prod_{i=1}^N\prod_{\gamma =1}^{2}\rme^{-\rmi  \hat{h}_{i}^{\gamma }(t) \sum_{i_1,\ldots,i_k}^N A_{i_1,\ldots,i_k}^{i} \alpha_i(S_{i_1}^{\gamma}(t),\ldots,S_{i_k}^{\gamma}(t))}\label{eq:disorder}.
\end{eqnarray}

Now we can average out the disorder in (\ref{eq:disorder})
\begin{eqnarray}
&&\overline{ \prod_{t=0}^{t_{max}\!-\!1}\!
\prod_{i=1}^N\prod_{\gamma =1}^{2}\prod_{i_1,\ldots,i_k}^N\rme^{-\rmi  \hat{h}_{i}^{\gamma }(t)  A_{i_1,\ldots,i_k}^{i} \alpha_i(S_{i_1}^{\gamma}(t),\ldots,S_{i_k}^{\gamma}(t))}}\label{eq:aver-recurr}\\
&&=\frac{1}{Z_A}\!
\prod_{i=1}^N
\prod_{\setI\subseteq\setN}\left\{\sum_{A_{\setI}^{i}}\left[\frac{1}{N^k}\delta_{A_{\setI}^{i};1}+(1-\frac{1}{N^k})\delta_{A_{\setI}^{i};0}\right]\right\}\delta\left[1; \sum_{\setI^\prime\subseteq\setN} A_{\setI^\prime}^{i}\right]\nonumber\\
&&\times\sum_{\alpha_i}\Prob(\alpha_{i})\prod_{i_1,\ldots,i_k}^N\rme^{-\rmi\sum_{t=0}^{t_{max}\!-\!1}\!
\sum_{\gamma =1}^{2}  \hat{h}_{i}^{\gamma }(t)  A_{i_1,\ldots,i_k}^{i} \alpha_i(S_{i_1}^{\gamma}(t),\ldots,S_{i_k}^{\gamma}(t))}\nonumber\\
&&=\frac{1}{Z_A}\!
\prod_{i=1}^N
\prod_{\setI\subseteq\setN}\left\{\sum_{A_{\setI}^{i}}\left[\frac{1}{N^k}\delta_{A_{\setI}^{i};1}+(1-\frac{1}{N^k})\delta_{A_{\setI}^{i};0}\right]\right\}\delta\left[1; \sum_{\setI^\prime\subseteq\setN} A_{\setI^\prime}^{i}\right]\nonumber\\
&&\times\prod_{i_1,\ldots,i_k}^N\overline{\rme^{-\rmi\sum_{t=0}^{t_{max}\!-\!1}\!
\sum_{\gamma =1}^{2}  \hat{h}_{i}^{\gamma }(t)  A_{i_1,\ldots,i_k}^{i} \alpha_i(S_{i_1}^{\gamma}(t),\ldots,S_{i_k}^{\gamma}(t))}}^{\;\alpha_i}\nonumber\\
&&=\frac{1}{Z_A}\!\left\{\prod_{i=1}^N\int_{-\pi}^{\pi}\frac{\mathrm d \omega_i}{2\pi}\rme^{\rmi\omega_i}\right\}\nonumber\\
&&\times\exp\!\left[\frac{1}{N^k}\!\!\sum_{i,i_1,\ldots,i_k}^{N}\!\!\overline{\left(\rme^{-\rmi\sum_{t=0}^{t_{max}\!-\!1}\!
\sum_{\gamma =1}^{2}  \hat{h}_{i}^{\gamma }(t) \alpha_i(S_{i_1}^{\gamma}(t),\ldots,S_{i_k}^{\gamma}(t))-\rmi\omega_i}\! -\!1\right)}^{\; \alpha}\!\!+O(N^{-k+1})\right]\nonumber
\end{eqnarray}
and use this result in the generating functional~(\ref{eq:GF2}) to obtain
\begin{eqnarray}
&&\overline{\Gamma[\vecPsi^1;\vecPsi^2]}\label{eq:GF-recurr-1}\\
&&=\sum_{\{S_{i}^{1}(t),S_{i}^{2}(t)\}}\Prob(\vecSpin^1(0),\vecSpin^2(0))\exp\left[-\rmi\sum_{t\!=\!0}^{t_{max}}\sum_{i\!=\!1}^N\sum_{\gamma\!=\!1}^2\psi_{i}^{\gamma }(t) S_{i}^{\gamma }(t)\right]\nonumber\\
&\times&\prod_{t=0}^{t_{max}\!-\!1}\!
\prod_{i=1}^N\prod_{\gamma =1}^{2}\left\{\int\frac{\mathrm d  h_{i}^{\gamma}(t)\mathrm d  \hat{h}_{i}^{\gamma }(t)}{2\pi}\;\rme^{\rmi  \hat{h}_{i}^{\gamma}(t) h_{i}^{\gamma}(t)}\frac{\rme^{\beta S_{i}^\gamma(t\!+\!1)h_{i}^{\gamma}(t)}}{2\cosh[\beta  h_{i}^{\gamma}(t)]}\right\}
\nonumber\\
&\times&\frac{1}{Z_A}\!\left\{\prod_{i=1}^N\int_{-\pi}^{\pi}\frac{\mathrm d \omega_i}{2\pi}\rme^{\rmi\omega_i}\right\}\exp\!\Big[N\int\{\mathrm d\mathbf{\hat{h}}(t)\}\int_{-\pi}^{\pi}\mathrm d\omega\nonumber\\
&\times&\frac{1}{N}\sum_{i=1}^{N}\left\{\delta(\omega-\omega_i)\prod_{t=0}^{t_{max}\!-\!1}\delta(\mathbf{\hat{h}}(t)-\mathbf{\hat{h}}_i(t))\right\}\nonumber\\
&\times&\sum_{\{\vecSpin_j(t)\}}\frac{1}{N^k}\!\sum_{i_1,\ldots,i_k}^{N}\left\{\prod_{t=0}^{t_{max}\!-\!1}\prod_{j=1}^{k}\delta[\vecSpin_j(t);\vecSpin_{i_j}(t)]\right\}\nonumber\\
&\times&\overline{\left(\rme^{-\rmi\sum_{t=0}^{t_{max}\!-\!1}\!
\sum_{\gamma =1}^{2}  \hat{h}^{\gamma }(t)\;\alpha(S_{1}^{\gamma}(t),\ldots,S_{k}^{\gamma}(t))-\rmi\omega} -1\right)}^{\;\alpha}+O(N^{-k+1})\Big]\nonumber,
\end{eqnarray}
where in the above we have defined the vectors $\mathbf{\hat{h}}_i(t)=(\hat{h}_{i}^1(t),\hat{h}_{i}^2(t))$ and $\vecSpin_{i}(t)=(S_{i}^{1}(t),S_{i}^{2}(t))$. Using the unity representations
 \begin{eqnarray}
&&\int\{\mathrm d P \mathrm d\hat{P}\}\rme^{\rmi N\!\sum_{\{\mathbf{S}(t)\}}\hat{P}(\{\mathbf{S}(t)\})[P(\{\mathbf{S}(t)\})-\frac{1}{N}\sum_{i=1}^N\prod_{t=0}^{t_{max}\!-\!1}\delta_{\mathbf{S}(t);\mathbf{S}_{i}(t)}]}=1\\
&&\int\{\mathrm d\Omega \mathrm d\hat{\Omega}\}\rme^{\rmi N\! \int\{\rmd \mathbf{\hat{h}}(t)\}\rmd \omega\hat{\Omega}(\{\mathbf{\hat{h}}(t)\},\omega)[\Omega(\{\mathbf{\hat{h}}(t)\},\omega)\!-\!\frac{1}{N}\sum_{i=1}^N\left[\prod_{t=0}^{t_{max}\!-\!1}\delta(\mathbf{\hat{h}}(t)\!-\!\mathbf{\hat{h}}_i(t))\right]\delta(\omega\!-\!\omega_i)]}=1\nonumber
\end{eqnarray}
gives
\begin{eqnarray}
&&\overline{\Gamma[\vecPsi^1;\vecPsi^2]}=\int\{\mathrm d P \mathrm d\hat{P}\mathrm d\Omega \mathrm d\hat{\Omega}\}\nonumber\label{eq:GF-recurr-2}\\
&&\times\exp N\Big[\rmi\sum_{\{\vecSpin (t)\}}\hat{P}(\{\vecSpin (t)\})P(\{\vecSpin (t)\})+\rmi\int\{\rmd \mathbf{\hat{h}}(t)\}\;\rmd \omega\;\hat{\Omega}(\{\mathbf{\hat{h}}(t)\},\omega)\;\Omega(\{\mathbf{\hat{h}}(t)\},\omega)\nonumber\\
&&+\int\{\mathrm d\mathbf{\hat{h}}(t)\}\;\mathrm d\omega\;\Omega(\{\mathbf{\hat{h}}(t)\},\omega)\sum_{\{\mathbf{S}_j(t)\}}\left\{\prod_{j=1}^k P(\{\vecSpin_j(t)\})\right\}
\\
&&\times\overline{\left(\rme^{-\rmi\sum_{t=0}^{t_{max}\!-\!1}\!
\sum_{\gamma =1}^{2}  \hat{h}^{\gamma }(t)\;\alpha(S_{1}^{\gamma}(t),\ldots,S_{k}^{\gamma}(t))-\rmi\omega} -1\right)}^{\;\alpha}-\frac{1}{N}\log Z_A\Big]\nonumber\\
&&\times\sum_{\{S_{i}^{1}(t),S_{i}^{2}(t)\}}\Prob(\vecSpin^1(0),\vecSpin^2(0))\exp\left[-\rmi\sum_{t\!=\!0}^{t_{max}}\sum_{i\!=\!1}^N\sum_{\gamma\!=\!1}^2\psi_{i}^{\gamma }(t) S_{i}^{\gamma }(t)\right]\nonumber\\
&&\times\prod_{t=0}^{t_{max}\!-\!1}\!
\prod_{i=1}^N\prod_{\gamma =1}^{2}\left\{\int\frac{\mathrm d  h_{i}^{\gamma}(t)\mathrm d  \hat{h}_{i}^{\gamma }(t)}{2\pi}\;\rme^{\rmi  \hat{h}_{i}^{\gamma}(t) h_{i}^{\gamma}(t)}\frac{\rme^{\beta S_{i}^\gamma(t\!+\!1)h_{i}^{\gamma}(t)}}{2\cosh[\beta  h_{i}^{\gamma}(t)]}\right\}
\nonumber\\
&&\times\left\{\prod_{i=1}^N\int_{-\pi}^{\pi}\frac{\mathrm d \omega_i}{2\pi}\rme^{\rmi\omega_i}\right\}\rme^{-\rmi\sum_{i=1}^N\hat{P}(\{\vecSpin_i(t)\})-\rmi\sum_{i=1}^N\hat{\Omega}(\{\mathbf{\hat{h}}_i(t)\},\omega_i)}.
\nonumber
\end{eqnarray}
Equation~(\ref{eq:GF-recurr-2}) gives one the saddle-point integral~(\ref{eq:GF-recurr-sp}) if write its site-dependent part in the exponential form
\begin{eqnarray}
\exp\left[\sum_{i=1}^N\log\sum_{\{\vecSpin_i(t)\}}\int\{\mathrm d \mathbf{h}_i(t)\mathrm d \mathbf{\hat{h}}_i(t)\}\int_{-\pi}^{\pi}\frac{\mathrm d \omega_i}{2\pi}M[\{\vecSpin_i(t),\mathbf{h}_i(t)\}\vert\{\mathbf{\hat{h}}_i(t)\}, \omega_i,\{\psi_i^\gamma(t)\}]\right]\nonumber\label{def:Mpart},
\end{eqnarray}
where the definition
\begin{eqnarray}
&&M[\{\vecSpin_i(t),\mathbf{h}_i(t)\}\vert\{\mathbf{\hat{h}}_i(t)\}, \omega_i,\{\psi_i^\gamma(t)\}]\label{def:M-recurr}\\
&&=\Prob(S_i^1(0),S_i^2(0))\exp\left[-\rmi\sum_{t\!=\!0}^{t_{max}}\sum_{\gamma\!=\!1}^2\psi_{i}^{\gamma }(t) S_{i}^{\gamma }(t)\right]\nonumber\\
&&\times \prod_{t=0}^{t_{max}\!-\!1}\!
\prod_{\gamma =1}^{2}\left\{\;\rme^{\rmi  \hat{h}_{i}^{\gamma}(t) h_{i}^{\gamma}(t)}\frac{\rme^{\beta S_{i}^\gamma(t\!+\!1)h_{i}^{\gamma}(t)}}{2\cosh[\beta  h_{i}^{\gamma}(t)]}\right\}\nonumber\\
&&\times\rme^{-\rmi \hat{\Omega}(\{\mathbf{\hat{h}}_i(t)\},\omega_i)+\rmi\omega_i-\rmi\hat{P}(\{\vecSpin_i(t)\})}\nonumber
\end{eqnarray}
is used with the shorthand $\int\{\mathrm d \mathbf{h}_i(t)\;\mathrm d \mathbf{\hat{h}}_i(t)\}=\prod_{t=0}^{t_{max}\!-\!1}\!
\prod_{\gamma =1}^{2}\int\frac{\mathrm d  h_{i}^{\gamma}(t)\;\mathrm d  \hat{h}_{i}^{\gamma }(t)}{2\pi}$.

\section{Solution of the saddle-point problem\label{section:solOfSP}}

In this section, we show that the conjugate order-parameter function $\hat{P}(\{\vecSpin (t)\})$, which is governed by the equation (\ref{eq:SP2-recurr}), is a constant function. In order to do this, we first rewrite the (disorder-averaged) path-probability (\ref{eq:PathProb}) as follows

\begin{eqnarray}
&&\overline{\Prob[\{\vecSpin^1(t)\};\{\vecSpin^2(t)\}]}=\frac{1}{Z_A}\Prob(\vecSpin^1(0),\vecSpin^2(0))\label{eq:disorder-averaged-PathProb}\\
&&\times\prod_{i=1}^N\overline{\!\Bigg\{\prod_{t\!=\!0}^{t_{max}\!-\!1}\!
\prod_{\gamma =1}^{2}\frac{\rme^{\beta S_{i}^\gamma(t\!+\!1)\sum_{i_1,\ldots,i_k}^N A_{i_1,\ldots,i_k}^{i} \alpha_i(S_{i_1}^{\gamma }(t),\ldots,S_{i_k}^{\gamma }(t))}}{2\cosh[\beta\sum_{i_1,\ldots,i_k}^N A_{i_1,\ldots,i_k}^{i} \alpha_i(S_{i_1}^{\gamma }(t),\ldots,S_{i_k}^{\gamma }(t))]}}\nonumber\\
&&\times\overline{\delta\left[1;\!\!\sum_{\setI^\prime\subseteq\setN} A_{\setI^\prime}^{i}\right]\Bigg\}}^{\{A_{\setI}^{i}\}}\nonumber\\
&=&\frac{1}{Z_A}\prod_{i=1}^{N}\Bigg[\int\{\mathrm d \mathbf{h}_i(t)\;\mathrm d \mathbf{\hat{h}}_i(t)\}\int_{-\pi}^{\pi}\frac{\mathrm d \omega_i}{2\pi}\;\overline{\rme^{\rmi\omega_i(1- \sum_{i_1,\ldots,i_k}^NA_{i_1,\ldots,i_k}^{i})}\Prob(S_i^1(0),S_i^2(0))}\label{def:Xi}\\
&&\times\overline{\left\{\prod_{t\!=\!0}^{t_{max}\!-\!1}\!
\prod_{\gamma =1}^{2}\;\rme^{\rmi  \hat{h}_{i}^{\gamma}(t)[ h_{i}^{\gamma}(t)- \sum_{i_1,\ldots,i_k}^N A_{i_1,\ldots,i_k}^{i} \alpha_i(S_{i_1}^{\gamma }(t),\ldots,S_{i_k}^{\gamma }(t))]}\frac{\rme^{\beta S_{i}^\gamma(t\!+\!1)h_{i}^{\gamma}(t)}}{2\cosh[\beta h_{i}^{\gamma}(t)]}\right\}}^{\{A_{\setI}^{i}\}}\nonumber\Bigg]\\
&=&\frac{1}{Z_A}\prod_{i=1}^{N}\int\{\mathrm d \mathbf{h}_i(t)\;\mathrm d \mathbf{\hat{h}}_i(t)\}\int_{-\pi}^{\pi}\frac{\mathrm d \omega_i}{2\pi}
\;\overline{\Xi_i[\{\vecSpin_i(t),\mathbf{h}_i(t)\}\vert\{\mathbf{\hat{h}}_i(t)\}, \omega_i]}^{\{A_{\setI}^{i}\}},\nonumber
\end{eqnarray}
where in the above we have averaged out the connectivity disorder only, i.e. $\overline{(\cdots)}^{\{A_{\setI}^{i}\}}=\sum_{\{A_{\setI}^{i}\}}\prod_{\setI\subseteq\setN}\left\{\frac{1}{N^k}\delta_{A_{\setI}^{i};1}+(1\!-\!\frac{1}{N^k})\delta_{A_{\setI}^{i};0}\right\}(\cdots)$ and the definition of single-site measure $\Xi_i$ is clear from equation~(\ref{def:Xi}).

In the next step, one notes that the effect of the Fourier transform
\begin{eqnarray}
&&\int\!\!\{\mathrm d \mathbf{h}_i(t)\;\mathrm d \mathbf{\hat{h}}_i(t)\}\!\!\!\int_{-\pi}^{\pi}\!\!\!\frac{\mathrm d \omega_i}{2\pi}
\;\Xi_i[\{\vecSpin_i(t),\mathbf{h}_i(t)\}\vert\{\mathbf{\hat{h}}_i(t)\}, \omega_i]\label{def:Fourier}\\
&&\times\rme^{-\rmi\sum_{t\!=\!0}^{t_{max}\!-\!1}\!
\mathbf{\hat{h}}_{i}(t).\theta(t)-\rmi\omega_i}\nonumber\\
&=&\int\{\mathrm d \mathbf{h}_i(t)\;\mathrm d \mathbf{\hat{h}}_i(t)\}\int_{-\pi}^{\pi}\frac{\mathrm d \omega_i}{2\pi}\;\rme^{\rmi\omega_i(0- \sum_{i_1,\ldots,i_k}^NA_{i_1,\ldots,i_k}^{i})}\Prob(S_i^1(0),S_i^2(0))
\nonumber\\
&&\times\Bigg[\prod_{t\!=\!0}^{t_{max}\!-\!1}\!
\prod_{\gamma =1}^{2}\;\rme^{\rmi  \hat{h}_{i}^{\gamma}(t)[ h_{i}^{\gamma}(t)-\theta^\gamma(t)- \sum_{i_1,\ldots,i_k}^N A_{i_1,\ldots,i_k}^{i} \alpha_i(S_{i_1}^{\gamma }(t),\ldots,S_{i_k}^{\gamma }(t))]}\nonumber\\
&&\times\frac{\rme^{\beta S_{i}^\gamma(t\!+\!1)h_{i}^{\gamma}(t)}}{2\cosh[\beta h_{i}^{\gamma}(t)]}\Bigg]\nonumber\\
&=&\left\{\prod_{t\!=\!0}^{t_{max}\!-\!1}\!
\prod_{\gamma =1}^{2}\frac{\rme^{\beta S_{i}^\gamma(t\!+\!1)\theta^{\gamma}(t)}}{2\cosh[\beta \theta^{\gamma}(t)]}\right\}\delta\left[0;\sum_{i_1,\ldots,i_k}^NA_{i_1,\ldots,i_k}^{i}\right]\Prob(S_i^1(0),S_i^2(0)),\nonumber
\end{eqnarray}
on the function $\Xi_i$ is to replace the Boolean function $ \alpha_i$ on site $i$ with a constant function  $\theta^\gamma(t)\in\{-1,1\}$. With this in mind we can define the average site-perturbed path-probability
\begin{eqnarray}
&&\overline{\frac{1}{N}\sum_{i=1}^N\Prob[\{\vecSpin^1(t)\};\{\vecSpin^2(t)\}]_{\vert_{\alpha_i\rightarrow\theta}}}\label{eq:pertrub-PathProb}\\
&=&\frac{Z_A^{-1}}{N}\sum_{i=1}^N\Bigg\{\prod_{j\neq i}^{N}\int\{\mathrm d \mathbf{h}_j(t)\;\mathrm d \mathbf{\hat{h}}_j(t)\}\nonumber\\
&&\times\int_{-\pi}^{\pi}\frac{\mathrm d \omega_j}{2\pi}
\;\overline{\Xi_j[\{\vecSpin_j(t),\mathbf{h}_j(t)\}\vert\{\mathbf{\hat{h}}_j(t)\}, \omega_i]}^{\{A_{\setI}^{j}\}}\Bigg\}\nonumber\\
&&\times\int\!\!\{\mathrm d \mathbf{h}_i(t)\;\mathrm d \mathbf{\hat{h}}_i(t)\}\!\!\!\int_{-\pi}^{\pi}\!\!\!\frac{\mathrm d \omega_i}{2\pi}
\;\overline{\Xi_i[\{\vecSpin_i(t),\mathbf{h}_i(t)\}\vert\{\mathbf{\hat{h}}_i(t)\}, \omega_i]}^{\{A_{\setI}^{i}\}}\nonumber\\
&&\times\rme^{-\rmi\sum_{t\!=\!0}^{t_{max}\!-\!1}\!
\mathbf{\hat{h}}_{i}(t).\theta(t)-\rmi\omega_i}.\nonumber
\end{eqnarray}
To show that the object in~(\ref{eq:pertrub-PathProb}) indeed defines a probability measure we note that it is clearly a positive semi-definite and the normalization of ~(\ref{eq:pertrub-PathProb}) can be established as follows
\begin{eqnarray}
&&\sum_{\{\mathbf{S}^1(t),\mathbf{S}^2(t)\}}\overline{\frac{1}{N}\sum_{i=1}^N\Prob[\{\vecSpin^1(t)\};\{\vecSpin^2(t)\}]_{\vert_{\alpha_i\rightarrow\theta}}}\label{eq:unity-1}\\
&=&\frac{Z_A^{-1}}{N}\sum_{i=1}^N\Bigg\{\prod_{j\neq i}^{N}\sum_{\{\mathbf{S}_j(t)\}}\int\{\mathrm d \mathbf{h}_j(t)\;\mathrm d \mathbf{\hat{h}}_j(t)\}\nonumber\\
&&\times\int_{-\pi}^{\pi}\frac{\mathrm d \omega_j}{2\pi}
\;\overline{\Xi_j[\{\vecSpin_j(t),\mathbf{h}_j(t)\}\vert\{\mathbf{\hat{h}}_j(t)\}, \omega_i]}^{\{A_{\setI}^{j}\}}\Bigg\}\nonumber\\
&&\times\sum_{\{\mathbf{S}_i(t)\}}\int\!\!\{\mathrm d \mathbf{h}_i(t)\;\mathrm d \mathbf{\hat{h}}_i(t)\}\!\!\!\int_{-\pi}^{\pi}\!\!\!\frac{\mathrm d \omega_i}{2\pi}
\;\overline{\Xi_i[\{\vecSpin_i(t),\mathbf{h}_i(t)\}\vert\{\mathbf{\hat{h}}_i(t)\}, \omega_i]}^{\{A_{\setI}^{i}\}}\nonumber\\
&&\times\rme^{-\rmi\sum_{t\!=\!0}^{t_{max}\!-\!1}\!
\mathbf{\hat{h}}_{i}(t).\theta(t)-\rmi\omega_i}\nonumber\\
&=&\frac{1}{N}\sum_{i=1}^N\prod_{j\neq i}^{N}\overline{\delta\left[1;\!\!\sum_{\setI^\prime\subseteq\setN} A_{\setI^\prime}^{j}\right]}^{\{A_{\setI}^{j}\}}
\overline{\delta\left[0;\!\!\sum_{\setI\subseteq\setN} A_{\setI}^{i}\right]}^{\{A_{\setI}^{i}\}}
/Z_A\nonumber\\
%
%
&=&\frac{1}{N}\sum_{i=1}^N\prod_{j\neq i}^{N}\sum_{\{A_{\setI}^{j}\}}\prod_{\setI\subseteq\setN}%
(1\!-\!\frac{1}{N^k})\exp\left[A_{\setI}^{j}\log\left(\frac{1}{N^k}(1\!-\!\frac{1}{N^k})^{-1}\right)\right]\delta\left[1;\!\!\sum_{\setI^\prime\subseteq\setN} A_{\setI^\prime}^{j}\right]\nonumber\\
&&\times\sum_{\{A_{\setI}^{i}\}}\prod_{\setI\subseteq\setN}(1\!-\!\frac{1}{N^k})\exp\left[A_{\setI}^{i}\log\left(\frac{1}{N^k}(1\!-\!\frac{1}{N^k})^{-1}\right)\right]\delta\left[0;\!\!\sum_{\setI\subseteq\setN} A_{\setI}^{i}\right]/Z_A\nonumber\\
%
 %
&=&\frac{1}{Z_A}\left\{(1\!-\!\frac{1}{N^k})^{N^k-1}\right\}^{N-1}(1\!-\!\frac{1}{N^k})^{N^k}\approx\frac{\rme^{-N}}{Z_A}\nonumber
\end{eqnarray}
Since for $N\rightarrow\infty$ the inverse of $Z_A$ in (\ref{eq:unity-1}) grows as $\rme^{N}$, we conclude that $\lim_{N\rightarrow\infty}\sum_{\{\mathbf{S}^1(t),\mathbf{S}^2(t)\}}\overline{\frac{1}{N}\sum_{i=1}^N\Prob[\{\vecSpin^1(t)\};\{\vecSpin^2(t)\}]_{\vert_{\alpha_i\rightarrow\theta}}}=1$.
Alternatively, the calculation in~(\ref{eq:unity-1}) can be carried out differently using results from the appendix~\ref{section:disorderAverage}:
\begin{eqnarray}
&&\sum_{\{\mathbf{S}^1(t),\mathbf{S}^2(t)\}}\overline{\frac{1}{N}\sum_{i=1}^N\Prob[\{\vecSpin^1(t)\};\{\vecSpin^2(t)\}]_{\vert_{\alpha_i\rightarrow\theta}}}\label{eq:unity-2}\\
&=&\frac{Z_A^{-1}}{N}\sum_{i=1}^N\Bigg\{\prod_{j\neq i}^{N}\sum_{\{\mathbf{S}_j(t)\}}\int\{\mathrm d \mathbf{h}_j(t)\;\mathrm d \mathbf{\hat{h}}_j(t)\}\nonumber\\
&&\times\int_{-\pi}^{\pi}\frac{\mathrm d \omega_j}{2\pi}
\;\overline{\Xi_j[\{\vecSpin_j(t),\mathbf{h}_j(t)\}\vert\{\mathbf{\hat{h}}_j(t)\}, \omega_i]}^{\{A_{\setI}^{j},\alpha_j\}}\Bigg\}\nonumber\\
&&\times\sum_{\{\mathbf{S}_i(t)\}}\int\!\!\{\mathrm d \mathbf{h}_i(t)\;\mathrm d \mathbf{\hat{h}}_i(t)\}\!\!\!\int_{-\pi}^{\pi}\!\!\!\frac{\mathrm d \omega_i}{2\pi}
\;\overline{\Xi_i[\{\vecSpin_i(t),\mathbf{h}_i(t)\}\vert\{\mathbf{\hat{h}}_i(t)\}, \omega_i]}^{\{A_{\setI}^{i},\alpha_i\}}\nonumber\\
&&\times\rme^{-\rmi\sum_{t\!=\!0}^{t_{max}\!-\!1}\!
\mathbf{\hat{h}}_{i}(t).\theta(t)-\rmi\omega_i}\nonumber\\
&=&\frac{1}{N}\sum_{i=1}^N\frac{\int\{\mathrm d P \mathrm d\hat{P}\mathrm d\Omega \mathrm d\hat{\Omega}\}\;\rme^{N\Psi[\{P,\hat{P},\Omega,\hat{\Omega}\}]}}{\int\{\mathrm d P^\prime \mathrm d\hat{P}^\prime\mathrm d\Omega^\prime \mathrm d\hat{\Omega}^\prime\}\;\rme^{N\Psi[\{P^\prime,\hat{P}^\prime,\Omega^\prime,\hat{\Omega}^\prime\}]}}\left\langle \rme^{-\rmi\sum_{t\!=\!0}^{t_{max}\!-\!1}\!
\mathbf{\hat{h}}_{i}(t).\theta(t)-\rmi\omega_i}\right\rangle_{M_i},\nonumber
\end{eqnarray}
where in the above $M_i$-average is generated by the site-dependent version of~(\ref{def:saddle-recurr}). Computing the integrals in~(\ref{eq:unity-2}) by the saddle-point method we find that  $\lim_{N\rightarrow\infty}\sum_{\{\mathbf{S}^1(t),\mathbf{S}^2(t)\}}\overline{\frac{1}{N}\sum_{i=1}^N\Prob[\{\vecSpin^1(t)\};\{\vecSpin^2(t)\}]_{\vert_{\alpha_i\rightarrow\theta}}}=\int\{\mathrm d\mathbf{\hat{h}}(t)\}\;\mathrm d\omega\;\Omega(\{\mathbf{\hat{h}}(t)\},\omega)\;\rme^{-\rmi\sum_{t\!=\!0}^{t_{max}\!-\!1}\!
\mathbf{\hat{h}}(t).\theta(t)-\rmi\omega}$, where the order-parameter function $\Omega$ is defined in~(\ref{eq:SP3-recurr}), but according to the calculation in (\ref{eq:unity-1}) this also equals unity. Thus, by using this result in the saddle-point equation~(\ref{eq:SP2-recurr}), we find that the equality $\hat{P}(\{\vecSpin (t)\})=\rmi k$ holds.

\end{document}